\documentclass[10pt,aps,prd,onecolumn,showpacs,amsmath,amssymb,nofootinbib,eqsecnum,preprintnumbers,superscriptaddress]{revtex4-2}

\usepackage{esvect}
\usepackage{xcolor}
\usepackage{amsfonts}
\usepackage{physics}
\usepackage{xspace}
\usepackage{mathtools}
\usepackage{amsthm}
\usepackage{tikz}
\usepackage{comment}
\usepackage[colorlinks=true,
            citecolor=red,
            linkcolor=blue,
            urlcolor=violet,
            filecolor=cyan,
            backref=false]{hyperref}
\usepackage{multirow}
\usepackage{titlesec}

\newtheorem{theorem}{Theorem}
\newtheorem{corollary}{Corollary}
\newtheorem{definition}{Definition}

\titleformat{\paragraph}[block]{\filcenter}{}{0pt}{}


\DeclareMathSymbol{*}{\mathbin}{symbols}{"03} 
\DeclareMathSymbol{\ast}{\mathbin}{symbols}{"03}

\bibliographystyle{apsrev4-2}

\begin{document}


\title{
Intrinsically-defined higher-derivative Carrollian scalar field theories without Ostrogradsky instability
}

\author{Poula Tadros}
\email{poula.tadros@matfyz.cuni.cz}
\affiliation{Institute of Theoretical Physics, Faculty of Mathematics and Physics, Charles University,
V Hole\v{s}ovi\v{c}k\'ach 2, Prague 180 00, Czech Republic}

\author{Ivan Kol\'a\v{r}}
\email{ivan.kolar@matfyz.cuni.cz}
\affiliation{Institute of Theoretical Physics, Faculty of Mathematics and Physics, Charles University,
V Hole\v{s}ovi\v{c}k\'ach 2, Prague 180 00, Czech Republic}

\date{\today}
\begin{abstract}
    We derive the most generic Carrollian higher derivative free scalar field theory intrinsically on a Carrollian manifold. The solutions to these theories are massless free particles propagating with speeds depending on the coupling constants in the Lagrangian, thus, allowing interference solutions which are not allowed on a Lorentzian manifold. This demonstrates that the set of solutions to the Carrollian theories is much larger than that of their Lorentzian counterparts. We also show that Carrollian higher derivative theories are more resistant to Ostrogradsky's instabilities. These instabilities can be resolved by choosing the coupling constants appropriately in the Carrollian Lagrangian, something that was proven to be impossible in Lorentzian theories.
\end{abstract}

\maketitle

\section{Introduction}
The \textit{Carrollian limit} was first considered by Levy-Leblond \cite{Levy-Leblond} and independently by Sen Gupta \cite{Gupta} as an In\"on\"u-Wigner contraction of the Poincar\'e group where the speed of light $c$ goes to zero, ${c\to0}$. However, at the time of this discovery there were no physical applications for such limit. Thus, it was only studied in a mathematical context until recently, when the Carrollian limit was linked to many applications in physics. Now, Carrollian physics and Carrollian structures are studied in the context of Carroll particles \cite{Zhang:2023jbi,Bergshoeff_2014,Marsot:2021tvq,Marsot:2022imf,Kasikci:2023tvs,Casalbuoni:2023bbh,Cerdeira:2023ztm,Kamenshchik:2023kxi}, condensed matter physics \cite{Bagchi:2022eui,Kubakaddi_2021,Kononov_2021,Bidussi:2021nmp,Figueroa-OFarrill:2023vbj,Figueroa-OFarrill:2023qty,Gromov:2022cxa,Pretko:2020cko}, field theory \cite{PhysRevD.106.085004,Chen:2023pqf,Bergshoeff:2022eog,Henneaux:2021yzg,Rivera-Betancour:2022lkc,Bagchi:2022eav}, conformal field theory \cite{Bagchi:2019xfx,Bagchi:2019clu,Bagchi:2021gai,PhysRevD.103.105001,Chen:2023pqf,Banerjee:2020qjj,Bekaert:2022oeh,Baiguera:2022lsw}, fluid mechanics \cite{Bagchi:2023ysc,Ciambelli:2018wre,Ciambelli:2018xat,campoleoni2019two,Ciambelli:2020eba,10.21468/SciPostPhys.9.2.018}, cosmology \cite{deBoer:2021jej,Bonga:2020fhx}, string theory \cite{PhysRevLett.123.111601,Bagchi:2021ban,Cardona:2016ytk,Isberg:1993av,Bagchi:2015nca,Fursaev:2023lxq,Fursaev:2023oep}, gravity \cite{Perez:2021abf,Perez:2022jpr,Hartong:2015xda,Figueroa-OFarrill:2021sxz,Hansen:2021fxi,Gomis:2020wxp,Bergshoeff:2022qkx,Guerrieri:2021cdz,Hansen:2020wqw} (it is regarded as the strong coupling limit of gravity theories \cite{Anderson:2002zn}), black holes \cite{Donnay:2019jiz,Grumiller:2019tyl,Redondo-Yuste:2022czg,Marsot:2022imf,Anabalon:2021wjy,Ecker:2023uwm,Ciambelli:2023tzb,Gray:2022svz,Bicak:2023vxs,Marsot:2022qkx}, null boundaries and flat holography \cite{Herfray:2021qmp,Chandrasekaran:2021hxc,Bagchi:2019clu,Ciambelli:2019lap,Bagchi:2016bcd,Ciambelli:2018wre,Figueroa-OFarrill:2021sxz,Herfray:2021qmp,Donnay:2022aba,Campoleoni:2023fug,Salzer:2023jqv}.

In the context of field theory, Carrollian theories were previously considered as limits of Lorentzian theories. That was until recently where scalar field theory was defined internally on a Carrollian manifold \cite{Ciambelli:2023xqk}. It is evident from both approaches that not all Carrollian theories can be derived as limits of Lorentzian theories. On the other hand, all Carrollian theories resulting from such limiting procedure can be defined internally without referring to a Lorentzian theory. Another discrepancy between the two approaches concerns the Carroll expansion itself. Different orders of the expansion give rise to different Carrollian theories. Thus, the resulting theories can not coexist or interact. That is because different orders require different rescalings by different powers of $c$ to be finite but non-vanishing in the Carroll limit. A mixed action then takes away the freedom to rescale each order separately resulting in only one order surviving \cite{Rivera-Betancour:2022lkc}. However, if the action is defined intrinsically, there would be no reason to disregard such coexistence of these Carrollian theories. Thus, defining Carrollian theories only by a limiting procedure restricts the resulting dynamics for no good reason. Consequently, in such theories, we typically have trivial dynamics for free particles and fields. This was shown to be false in the case of intrinsically defined Carrollian theories \cite{Ciambelli:2023tzb,Ciambelli:2023xqk}. The reason for non-trivial dynamics is precisely the coexistence and mixing between different actions.

In this paper, we define Carrollian higher-derivative scalar field theory intrinsically without referring to a limiting procedure requiring only diffeomorphism invariance but not necessarily Carroll boost invariance. We show explicitly that there are infinitely many such Carrollian theories that are not reachable by taking the limit of the Lorentzian higher derivative scalar field theory, and reinforce the results of \cite{Ciambelli:2023xqk} that these theories have non-trivial dynamics. In fact, their dynamics is far richer than the Lorentzian higher derivative scalar field theory. This rich dynamics is captured by the fact that the solutions to the field equations represent massless particles propagating with different speeds, giving rise to interference patterns impossible in Lorentzian theories. We also discuss the stability of the Carrollian higher derivative scalar field theory compared to its Lorentzian counterpart. Specifically, we focus on \textit{Ostrogradsky's instabilities}, which occur in higher-derivative field theories due to the Hamiltonian becoming unbounded as a consequence of its linear dependence on the momenta associated with the extra degrees of freedom defined due to the presence of higher derivatives in the Lagrangian.
We show that there is a striking difference between Carrollian and Lorentzian theories. It is well known that Ostrogradsky's instabilities in Lorentzian theories can not be removed by a choice of the parameters in the Lagrangian but require constraints to the phase space \cite{Chen:2012au}. In contrast, in Carrollian theories, we can get rid of these instabilities by adequate choices of the coupling constants (which are parameters of the theory) without the need to impose additional constraints. The paper is organised as follows:
\begin{itemize} \itemsep0em
    \item In Sec. \ref{sec2}, we give a brief introduction on Carrollian geometry and the tools we use throughout the paper. 
    \item In Sec. \ref{sec3}, we review Lorentzian higher-derivative scalar field theory on flat spacetime and its plane wave solution. We also give a short review on Ostrogradsky's theorem and Ostrogradsky's instabilities.
    \item In Sec. \ref{sec4}, we intrinsically define the Carrollian higher-derivative scalar field theory with $n$ derivatives. We then inspect its solutions and show that, in general, it represents $n$ massless particles moving with different speeds. 
    \item In Sec. \ref{sec6}, we discuss Ostrogradsky's instabilities for Carrollian higher derivative scalar field theories, and derive the condition to remove them. We then formulate Ostrogradsky's theorem for these theories.
    \item In Sec. \ref{sec7}, we conclude the paper with a summary and possible future directions.
\end{itemize}

\section{Introduction to Carrollian geometry}\label{sec2}

We begin this section by defining a Carrollian manifold as a $(d+1)-$dimensional manifold equipped with a Carrollian structure, i.e., a nowhere vanishing vector field $v^{\mu}$ and a degenerate metric $h_{\mu\nu}$ such that $v^{\mu}h_{\mu\nu}=0$. We then define an associated one form $E_{\mu}$ to $v^{\mu}$ such that $E_{\mu}v^{\mu}=1$ which plays the role of an Ehresmann connection. We can also define the projector $h_{\mu}^{\nu} = \delta_{\mu}^{\nu}-E_{\mu}v^{\nu}$.
We can alternatively define a Carrollian manifold by contracting the light cone at each point of a Lorentzian manifold $\mathcal{M}$. This forces a splitting of the manifold's tangent bundle $T\mathcal{M}$ into a vertical bundle $\mathrm{Ver} \mathcal{M}$ whose fibers are isomorphic to the timelike vector flow, and horizontal bundles $\mathrm{Hor} \mathcal{M}$ whose fibers are isomorphic to spacelike slices of $\mathcal{M}$. Thus, we have $T\mathcal{M} = \mathrm{Ver} \mathcal{M} \oplus \mathrm{Hor} \mathcal{M}$. The result is a foliation of the manifold, by lines corresponding to the flow of $v^{\mu}$, into codimension 1 submanifolds with metric $h_{\mu\nu}$ as leaves. It is worth mentioning that is construction is different from the usual foliation of Lorentzian manifolds into codimension 1 submanifolds (as leaves) without the light cone contraction since the timelike vector flow in the Lorentzian case has the freedom to propagate inside the light cone from one leaf to another.

Given a Carrollian manifold, we can define a covariant derivative compatible with the Carroll structure (generally torsionfull). In general the covariant derivative is defined by the conditions
\begin{equation}
    \begin{aligned}
        \nabla_{\mu}h^{\nu\rho} &= 0, \hspace{10pt} \nabla_{\mu}h_{\nu\rho} = 0,\hspace{10pt}
        (\nabla_{\mu}-\omega_{\mu})v^{\nu}=0, \hspace{10pt}  (\nabla_{\mu}+\omega_{\mu})E_{\nu}=0,
    \end{aligned}
\end{equation}
where $\omega_{\mu} = k E_{\mu}-\pi_{\mu}$ such that $k$ is a constant and $\pi_{\mu}$ is a covector satisfying $\pi_{\mu}v^{\mu}=0$. For the explicit calculation of the torsion and curvature of this covariant derivative, see \cite{Hansen:2021fxi}. We distinguish between two cases: 
\begin{enumerate} 
    \item If $k$ and $\pi_{\mu}$ are non zero, then the structure is called a weak Carrollian structure.
    \item If $k=0$ and $\pi_{\mu}=0$, then it is called a strong Carrollian structure.
\end{enumerate}
In this paper, we assume a strong Carrollian structure, a torsionless connection, and a flat spacetime.

Having defined the full structure, we will distinguish between two distinct types of Carrollian transformations on curved Carrollian manifolds, namely, Carroll boosts and Carroll diffeomorphisms. Local Carroll boosts are defined by
\begin{equation}
    \begin{aligned}
        \delta_{\lambda}v^{\mu} = 0,\hspace{10pt} \delta_{\lambda}h_{\mu\nu} = 0, \hspace{10pt} \delta_{\lambda}h^{\mu\nu} = h^{\mu\rho}\lambda_{\rho}v^{\nu} +  h^{\nu\rho}\lambda_{\rho}v^{\mu}, \hspace{10pt} \delta_{\lambda}E_{\mu}=\lambda_{\mu},
    \end{aligned}
\end{equation}
where $\lambda_{\mu}$ is a covector field on the Carrollian manifold satisfying $v^{\mu}\lambda_{\mu}=0$. In the case where $\lambda_{\mu}$ is a constant covector field, the boosts are called global Carroll boosts. Notice that $v^{\mu}$ and $h_{\mu\nu}$ are boost invariant, hence they are often regarded as the fundamental Carroll objects. This choice is not unique; in fact, it is possible to choose $E_{\mu}$ and $h^{\mu\nu}$ to be the fundamental object; this possibility was reviewed in \cite{Baiguera:2022lsw}. To ensure that a foliation with spatial leaves is well defined, $E_{\mu}$ must satisfy the Frobenius condition $\textbf{E}\wedge d\textbf{E}=0$ where $d\textbf{E}$ is the exterior derivative of the Ehresmann connection. In adapted coordinates $(t,x^a)$ where $a\in \{1,2,\cdots,d\}$, the Ehresmann connection is written as
\begin{equation}\label{ehr}
    E_{\mu}dx^{\mu} = -dt+b_adx^a,
\end{equation}
where $b_a$ is the shift vector. The form \eqref{ehr} does not satisfy the Frobenius condition in general, thus, we have to choose an appropriate boost frame so that the Frobenius condition is satisfied. For simplicity, we choose the frame where $b_a = 0$. In this boost frame, the Carrollian structure in flat space in adapted coordinates can be written as
\begin{equation}
\begin{aligned}
    v^{\mu}\partial_{\mu}=\partial_t, \hspace{10pt} E_{\mu}dx^{\mu} = -dt, \hspace{10pt} h_{\mu\nu}dx^{\mu}dx^{\nu}=\delta_{ab}dx^{a}dx^{b}, \hspace{10pt} 
h^{\mu\nu}\partial_{\mu}\partial_{\nu}=\delta^{ab}\partial_{a}\partial_{b}.
   \end{aligned} 
\end{equation}

Carroll diffeomorphisms on the other hand are restricted type of diffeomorphisms (compared to the Lorentzian case) defined by
\begin{equation}
    t\to t'(t,x), \hspace{2cm} x \to x'(x).
\end{equation}
The Jacobian of these transformations is given by
\begin{equation}
\begin{pmatrix}
    j(t,x) & j_a(t,x) \\ 0 & j^a_b(x)
\end{pmatrix},
\end{equation}
where $j(t,x) = \tfrac{\partial t'}{\partial t}, j_a(t,x) = 
\tfrac{\partial t'}{\partial x^a}$, and $j^a_b(x) = 
\tfrac{\partial x'^{a}}{\partial x^b}$ \cite{Ciambelli:2018xat}. Tensors are then transformed according to the position of their indices either by the Jacobian or by its inverse.

In this paper, we consider scalar field Lagrangians on Carrollian manifolds (with strong Carrollian structure) that are diffeomorphism invariant but not necessarily Carroll boost invariant. The main reason is that magnetic or mixed actions can not be boost invariant since the fields couple to $h^{\mu\nu}$ which is not boost invariant. These theories are called Carrollian in the sense of \cite{Rivera-Betancour:2022lkc} where the scalar field is defined on a Carrollian background and only gaining boost invariance if an extra term is added (a conformal coupling term), this notion was also used in \cite{Ciambelli:2023xqk} where dynamics were derived from scalar fields on a Carroll manifold without requiring boost invariance. Notice that in other papers, for example in \cite{Baiguera:2022lsw}, the main attribute of a Carrollian theory is boost invariance, so by this definition the theory considered here is not Carrollian, however, we will not use this nomenclature in this paper.

\section{Lorentzian higher derivative scalar field theory}\label{sec3}

\subsection{Lagrangian and field equations}
The Lagrangian for a relativistic higher derivative massless scalar field theory is given by
\begin{equation}\label{rel lag}
    \mathcal{L} = \phi \Box^n \phi,
\end{equation}
where $\phi$ is the scalar field and $\Box = g^{\mu\nu}\nabla_{\mu}\nabla_{\nu}$. The field equations of such a theory is given by
\begin{equation}\label{EOM}
    \Box^n \phi = 0.
\end{equation}
A remark that would be important later is that any solution to the usual wave equation ($\Box \phi = 0$) is also a solution for \eqref{EOM}.  In a flat space, a solution for \eqref{EOM} is a plane wave
\begin{equation}\label{sol}
    \phi = A e^{i(\omega t- \vec{k}.\vec{x})},
\end{equation}
 where $A$ is the amplitude, $\omega$ is the frequency and $k$ is the wave number. The dispersion relation for this wave is given by 
 \begin{equation}
     \omega = \pm c k,
 \end{equation}
where $k = |\vec{k}|$. The general solution is a superposition of such waves with positive and negative frequencies (energies). An important remark is that in this case all waves propagate at the speed of light, i.e. the theory does not allow them to have different propagation speeds. 

\subsection{Ostrogradsky's theorem}
In this subsection we review the Ostrogradsky's theorem \cite{Ostrogradsky:1850fid}, which concerns the stability of theories with higher time derivatives, in the case of scalar field theories whose Lagrangians contain only one scalar field.

\begin{definition}
 Let $\mathcal{L}(\phi, \partial_{\mu}\phi, \partial_{\mu}^2\phi,..., \partial^n_{\mu}\phi)$ be a higher derivative Lagrangian. $\mathcal{L}$ is called non-degenerate if \begin{equation}
    \tfrac{\partial \mathcal{L}}{\partial \phi^{(n)}} \neq 0,
\end{equation}
where $\phi^{(n)}$ is the $n-$th time derivative of the field $\phi$.
\end{definition}
\begin{theorem}[Ostrogradsky, 1850] \cite{Ostrogradsky:1850fid,Chen:2012au}
 Any non-degenerate higher time derivative Lagrangian leads to a Hamiltonian that is not bounded.
\end{theorem}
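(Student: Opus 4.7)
The plan is to follow Ostrogradsky's canonical construction and show that the resulting Hamiltonian is linear in all but one of the conjugate momenta, from which unboundedness follows immediately. First I would promote the time derivatives of $\phi$ to independent canonical coordinates by setting $q_i := \phi^{(i-1)}$ for $i=1,\ldots,n$, and introduce the Ostrogradsky momenta
\begin{equation}
    p_i = \sum_{j=i}^{n}(-1)^{j-i}\frac{d^{j-i}}{dt^{j-i}}\frac{\partial\mathcal{L}}{\partial\phi^{(j)}},
\end{equation}
so that $p_n = \partial\mathcal{L}/\partial\phi^{(n)}$ while the remaining momenta are obtained recursively via $p_i = \partial\mathcal{L}/\partial\phi^{(i)} - \dot p_{i+1}$.

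Next I would invoke the non-degeneracy hypothesis, interpreted (as is standard in the Ostrogradsky treatment) as ensuring that the relation $p_n = \partial\mathcal{L}/\partial\phi^{(n)}$ can be inverted to express the top derivative as $\phi^{(n)} = A(q_1,\ldots,q_n,p_n)$. This is the only place the hypothesis enters, and it is precisely what permits the top-level Legendre transform to go through; were the hypothesis to fail, the system would fall into the degenerate/constrained branch handled by Dirac's procedure rather than by Ostrogradsky's.

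I would then assemble the Hamiltonian by the usual Legendre transform,
\begin{equation}
    H = \sum_{i=1}^{n} p_i \dot q_i - \mathcal{L} = \sum_{i=1}^{n-1} p_i\, q_{i+1} + p_n\, A(q_1,\ldots,q_n,p_n) - \mathcal{L}\bigl(q_1,\ldots,q_n,A\bigr),
\end{equation}
using $\dot q_i = q_{i+1}$ for $i<n$. The crucial structural feature to highlight is that the lower momenta $p_1,\ldots,p_{n-1}$ appear only through the linear terms $p_i q_{i+1}$, with coefficients $q_{i+1}$ that are independent phase-space coordinates not constrained to any sign or magnitude.

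Unboundedness then follows immediately: for fixed values of all other variables with any one $q_{i+1}\neq 0$, driving $p_i$ in the direction opposite in sign to $q_{i+1}$ sends $H\to-\infty$. The one point requiring a careful argument, which I would regard as the main (minor) obstacle, is verifying that the remainder $p_n\, A(q_1,\ldots,q_n,p_n) - \mathcal{L}(q_1,\ldots,q_n,A)$ is genuinely independent of $p_1,\ldots,p_{n-1}$ and therefore cannot conspire to cancel the linear divergence; this follows from the construction, since $A$ is defined via $p_n$ alone. The structural linearity in the lower momenta identified here is exactly the feature that will later be contrasted with the intrinsically Carrollian setup, where the altered structure of the Legendre transform opens the door to evading the instability.
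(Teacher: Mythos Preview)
Your proposal is correct and follows essentially the same approach as the paper: introduce Ostrogradsky coordinates $q_i=\phi^{(i-1)}$ and momenta $p_i$, use non-degeneracy to invert $p_n=\partial\mathcal{L}/\partial\phi^{(n)}$ for $\phi^{(n)}$, and observe that the resulting Hamiltonian is linear in $p_1,\ldots,p_{n-1}$, hence unbounded. Your treatment is in fact slightly more careful than the paper's sketch, since you explicitly note that the remainder $p_n A-\mathcal{L}$ is independent of the lower momenta and so cannot cancel the linear terms.
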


\noindent This means that the vacuum of the field theory is unstable. The proof for scalar field theories with one real scalar is given by constructing the Hamiltonian as follows:
First of all, due to non-degeneracy, the higher order time derivative can be expressed in terms of lower order derivatives
\begin{equation}
    \phi^{(n)} = f(\phi,\phi^{(1)},\cdots,\phi^{(n-1)}).
\end{equation}
The canonical coordinates and momenta are defined as
\begin{equation}\label{can mom}
    \begin{aligned}
        Q_i = \phi^{(i-1)}, \hspace{1cm} P_i=\sum_{j=i}^{n}\big( - \tfrac{d}{dt}\big)^{j-i} \tfrac{\partial \mathcal{L}}{\partial \phi^{(j)}},
    \end{aligned}
\end{equation}
where $i\in \{ 1,2,\cdots, n \}$. Thus, the phase space coordinates are
$
    Q_0,Q_1,\cdots,Q_{n-1},P_1,P_2,\cdots,P_n.
$
The Hamiltonian is given by
\begin{equation}
    \mathcal{H} = \sum_{i=0}^n P_{i+1}\phi^{(i)}-\mathcal{L} = \sum_{i=1}^{n-1}P_iQ_{i} + P_nf(Q_1,\cdots, Q_{n-1},P_n) - \mathcal{L},
\end{equation}
which is unbounded due to the linearity in $P_1, \cdots,P_{n-1}$ but if the function $f$ is of a suitable form, the Hamiltonian will not be linear in $P_{n}$. This is called the Ostrogradsky's instability since the Hamiltonian, in general, is not bounded from below. Thus, in the presence of interactions the vacuum of the field theory will decay indefinitely to infinite negative energies. It is clear from this construction that if the Lagrangian contains up to a first-order derivative of the field, $f$ is computed to eliminate linearity in the Hamiltonian. Higher-derivative field theories have been constructed with degenerate Lagrangians to circumvent this instability. However, it is not possible by choosing specific values for the parameters in the Lagrangian but by highly restricting the phase space (by the introduction of constraints), i.e., dynamics and possible solutions of such theories must be heavily restricted. In what follows, we show that Carrollian theories do not suffer from this problem. Stable higher-derivative Carrollian theories, on the other hand, can be systematically defined without the need of constraints, i.e., having less restricted dynamics than Lorentzian theories. For more discussion on Ostrogradsky instability, possible generalizations and quantum considerations see \cite{Woodard:2015zca,Motohashi:2020psc} and the references therein.

\section{Intrinsic Carroll higher derivative scalar field theory}\label{sec4}
In this section, we define higher-derivative scalar field theory on a flat Carrollian manifold and demonstrate that the space of solutions is strictly larger than its counterpart on a Lorentzian manifold. The only diffeomorphism invariant objects we have on a Carrollian manifold are $v^{\mu}\partial_{\mu}\phi$ and $h^{\mu\nu}\partial_{\mu}\partial_{\nu}\phi$ from which we construct a diffeomorphism invariant Lagrangian for higher-derivative scalar field theory. Restricting ourselves to only terms with $2n-$th derivatives, the list of diffeomorphism invariant terms can be parameterized using the number of $v^{\mu}\partial_{\mu}\phi$ objects involved, i.e.,
\begin{equation}\label{lagr22}
    \mathcal{L}_{2i} = \phi (v^{\mu}\partial_{\mu})^{2i}(h^{\alpha\beta}\partial_{\alpha}\partial_{\beta})^{n-i}\phi,
\end{equation}
where $i \in \{ 0,1, \cdots, n\}$, and the most general Lagrangian will be a linear combination of these terms:
\begin{equation}\label{lagr}
    \mathcal{L} = \sum_{i=0}^n \Bar{g}_i \mathcal{L}_{2i},
\end{equation}
where $\Bar{g}_i$ are arbitrary coupling constants. Notice that the Lagrangian is not Carroll boost invariant but only diffeomorphism invariant (by construction), this means that non zero local energy flux is not forbidden and we can get non trivial dynamics. From now on, we will use adapted coordinates, i.e., $v^{\mu}\partial_{\mu}=\partial_t$ and $h^{\alpha\beta}\partial_{\alpha}\partial_{\beta} = \partial_a\partial^a$, where $a \in \{1,2,\cdots,d \}$, is the spatial Laplacian operator.

Since the coupling constants $\Bar{g}_i$ are arbitrary, without loss of generality, we can choose them to be
\begin{equation}
    \Bar{g}_{i}= \sum_{\mathclap{\substack{k_1,k_2,\cdots,k_i =1 \\ k_1>k_2>\cdots>k_n}}}^n g_1g_2 \cdots \hat{g}_{k_1}\cdots \hat{g}_{k_2}\cdots\hat{g}_{k_{n-i}}\cdots g_n,
\end{equation}
where $\hat{g}_{k}$ means that $g_{k}$ is being omitted from the product. Substituting in the Lagrangian \eqref{lagr}, rearranging the terms and factorizing we get
\begin{equation}\label{newlag}
    \mathcal{L}= \phi \Box_{g_1}\Box_{g_2}\cdots\Box_{g_n}\phi,
\end{equation}
where $\Box_{g_k}= g_k \partial_t^2 + \partial_i\partial^i$, and $g_k$ are taken to be negative. This is a large family of Lagrangians parameterized by the coupling constants. The Lorentzian theory is recovered when setting $g_1=g_2=\cdots=g_n=-1/c^2$, which clearly shows that Carrollian theories are more general and should not be thought of only as limits of Lorentzian theories.

The equations of motion derived by varying \eqref{newlag}, using the same procedure as in \cite{Ciambelli:2023xqk} is
\begin{equation}\label{eom1}
    \Box_{g_1}\Box_{g_2}\cdots\Box_{g_n}\phi=0.
\end{equation}
We are interested in solutions to this equation of the form \eqref{sol} (plane waves). Substituting \eqref{sol} into \eqref{eom1} and noting that all the differential operators $\Box_{g_k}$ commute, we get $n$ different dispersion relations of the form
\begin{equation}
    \omega = \pm \tfrac{1}{\sqrt{- g_i}} k,
\end{equation}
where $i \in \{ 1,2,\cdots,n \}$. This means that we have coexisting $n$ massless particles, each of which moves in vacuum with different speeds (not necessarily the speed of light). This can be thought of as a generalization of Lorentzian theories where any massless particle must move at the speed of light. We can still recover the solution of the Lorentzian theory by setting all coupling constants to $-1/c^2$, which is also the only solution we get if we begin with the higher derivative Lorentzian theory and then take the Carroll limit as we will show in the next section. Another important aspect to stress on is that, as plane waves, the Carrollian theory permits interference patterns that were forbidden in any Lorentzian theory. Namely, interference between multiple waves moving with different speeds, this is a direct consequence of the superposition principle for equation \eqref{eom1}. This stresses even further that Lorentzian theories have more restrictive dynamics than Carrollian theories.

Another equivalent way to derive the above results is by considering the Green's function for the differential operator $\Box_{g_1}\Box_{g_2}\cdots\Box_{g_n}$ defined by
\begin{equation}\label{green}
    \Box_{g_1}\Box_{g_2}\cdots\Box_{g_n}G(x-x')=-i\delta^{(d+1)}(x-x'),
\end{equation}
where $x = (t,\vec{x})$ and $x'=(t',\vec{x}')$ are position vectors on the Carrollian manifold, and $\delta^{(d+1)}(x-x')$ is the covariant delta distribution on the manifold defined by
\begin{equation}
    \int_{\mathcal{C}} \epsilon  \ \delta^{(d+1)}(x-x') f(x) = f(x'),
\end{equation}
where $\mathcal{C}$ is the Carrollian manifold and $\epsilon$ is the volume form on the Carrollian manifold. The Fourier transform for the Green's function is given by 
\begin{equation}
    G(x-x') = \int_{\mathbb{R}^{n+1}} \tfrac{d^{d+1}\boldsymbol{k}}{(2\pi)^{d+1}}e^{i(\omega (t-t') - \vec{k}\cdot (\vec{x}-\vec{x'}))}\title{G}(\boldsymbol{k}),
\end{equation}
where $\boldsymbol{k} = (\omega,\vec{k})$ is the Carrollian version of momentum. Substituting into \eqref{green} we get
\begin{equation}
    (g_1\omega^2 + k^2)(g_2\omega^2 + k^2)\cdots (g_n\omega^2 + k^2)\tilde{G}(\boldsymbol{k}) = i,
\end{equation}
where $k= |\vec{k}|$. This is an algebraic equation which can be solved by
\begin{equation}
    \tilde{G}(\boldsymbol{k}) = \tfrac{-i}{(g_1\omega^2 + k^2)(g_2\omega^2 + k^2)\cdots (g_n\omega^2 + k^2)}.
\end{equation}
The poles of the Fourier transformed Green's function are at \begin{equation}
    \omega = \pm \tfrac{1}{\sqrt{- g_i}} k,
\end{equation}
 meaning that there are $n$ coexisting massless particles moving at different speeds as explained above.

 We conclude this section by showing that, although the most general Carrollian higher-derivative scalar field Lagrangian can not be derived from the limit of its Lorentzian counterpart, we can still derive all the diffeomorphism invariant terms (non-conformal) from various orders of the Carrollian expansion of the Lorentzian Lagrangian \eqref{rel lag}. In flat spacetime in adapted coordinates, $\Box$ can be written as
\begin{equation}
\Box = -\tfrac{1}{c^2}\partial^2_t +\partial_a\partial^a,
\end{equation}
thus,
\begin{equation}
    \Box^n = \sum_{k=0}^n \tfrac{n!}{k!(n-k)!} (\tfrac{-1}{c^2})^{n-k}\partial_t^{2k}(\partial_a\partial^a)^{n-k}.
\end{equation}
 Substituting in the Lagrangian we get
\begin{equation}\label{flat space Lagrangian}
    \mathcal{L}=  \sum_{k=0}^n \tfrac{n!}{k!(n-k)!} (\tfrac{-1}{c^2})^{k} \mathcal{L}_{2k},
\end{equation}
where $\mathcal{L}_{2k} = \phi\partial_t^{2k}(\partial_a\partial^a)^{n-k} \phi$ which matches with \eqref{lagr22}. This means that although the most general Carrollian Lagrangian can not be derived from the expansion of the Lorentzian Lagrangian, and different orders in the expansion can not coexist, all Carroll invariant quantities (with n derivatives) can be obtained individually as different orders of the Carroll expansion of the Lorentzian Lagrangian.

\section{Ostrogradsky's theorem for Carrollian scalar fields}\label{sec6}

In this section, we rewrite Ostrogradsky's theorem but for Carrollian field theories. We derive a general condition for a higher derivative Carrollian scalar field theory in order not to have Ostrogradsky's instability and show that it can be satisfied neither for Lorentzian theories nor for Carrollian theories derived as a limit of Lorentzian theories.

We begin with the Lagrangian
\begin{equation}
    \mathcal{L} = \phi \Box_{g_1}\Box_{g_2}\cdots\Box_{g_n}\phi,
\end{equation}
and calculate the canonical coordinates \eqref{can mom}:
\begin{equation}\label{eq.}
    \begin{aligned}
        Q_i = \partial_t^{i-1}\phi.
    \end{aligned}
\end{equation}
The naive choice for canonical momenta can be calculated from \eqref{can mom} as
\begin{equation}
    P_i = 2 (-1)^{n-i}g_1g_2\cdots g_n \partial_t^{2n-i}\phi,
\end{equation}
however, this is not a conjugate to $Q_i$ when one of the coupling constants equals zero since they do not satisfy the canonical commutation relations. Thus, we have to perform a rescaling to get the true canonical momenta. The reader can refer to the appendices of ref. \cite{Koutrolikos:2023evq} for more details on the validity of canonical momenta in Carrollian and Galilean theories. These are defined using $P_i$ as
\begin{equation}
    P_i = 2 (-1)^{n-i}g_1g_2\cdots g_n p_i,
\end{equation}
where the parameter independent momenta $p_i = \partial_t^{2n-i}\phi$ with $i \in \{1,2,\cdots,n\}$.
The Hamiltonian is then
\begin{equation}
    \mathcal{H} = \sum_{i=0}^{n}2(-1)^{n-i-1} g_1g_2\cdots g_n p_{i+1}\phi^{(i)}-\mathcal{L}.
\end{equation}
As we seen before, we can write $\phi^{(n)} = f(Q_1,\cdots,Q_{n},P_n)$, substituting in the Hamiltonian, we get
\begin{equation}
    \mathcal{H} = \sum_{i=1}^{n-1}2(-1)^{n-i} g_1g_2\cdots g_n p_{i}Q_{i} + 2 g_1g_2\cdots g_n p_{n}f(Q_1,\cdots,Q_{n-1},P_n) -\mathcal{L},
\end{equation}
Using \eqref{eq.}, we can see that $\phi^{(n)}=f(Q_1,\cdots,Q_n,P_n)=p_n$. Substituting in the Hamiltonian, we get
\begin{equation}
    \mathcal{H} = \sum_{i=1}^{n-1}2(-1)^{n-i} g_1g_2\cdots g_n p_{i}Q_{i} + 2 g_1g_2\cdots g_n p_n^2 -\mathcal{L}.
\end{equation}

One can clearly see that the Hamiltonian has linear terms in the momenta (except $p_n$). However, unlike Lorentzian theories, we have here the freedom to choose the constants $g_1,\cdots,g_n$, and it is evident that if we choose at least one of them to be zero (which can be interpreted as a particle with infinite velocity or an instantaneous force similar to Newtonian gravity), all linear terms will not contribute to the Hamiltonian. This is still in line with the original Ostrogradsky's theorem, since if we equate one of the coupling constants to zero, the Lagrangian becomes degenerate. The advantages of Carrollian theories are, first of all, that the restriction on the theory is minimal compared to Lorentzian theories (setting a coupling constant to zero rather than restricting the phase space), and we still have a larger class of theories than Lorentzian theories. Also, removing Ostrogradsky's ghosts is done by a well defined general condition unlike Lorentzian where this condition does not exist, and theories must be examined individually. 

It is also evident that this condition cannot be satisfied by theories derived as a limit of Lorentzian theories, since in that case all constants are equal to $-1/c^2$ supporting the results in \cite{Chen:2012au} that Ostrogradsky's instability (in Lorentzian theories) cannot be removed by a choice of the parameters in the theory but rather by constraining the phase space. We can now write Ostrogradsky's theorem for Carrollian scalar field theories.

\begin{theorem}
  Any higher derivative Carrollian scalar field theory with Lagrangian of the form \eqref{newlag} with $n\geq 2$ leads to an unbounded Hamiltonian unless at least one constant satisfies $g_i = 0$.
\end{theorem}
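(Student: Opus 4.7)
The plan is to mirror the Ostrogradsky construction outlined in the section: build the phase space associated with the Lagrangian \eqref{newlag}, write the Hamiltonian explicitly in canonical variables, and isolate the precise terms whose presence forces unboundedness. The theorem is then a direct reading off of when those terms vanish.

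First, I would verify non-degeneracy and fix the canonical variables. For $n\geq 2$ and all $g_i\neq 0$, the Lagrangian \eqref{newlag} contains $\phi^{(2n)}$ with nonzero coefficient $g_1 g_2\cdots g_n$, so it is non-degenerate in the Ostrogradsky sense and one may invert for the highest time derivative. Following the prescription \eqref{can mom} gives the coordinates $Q_i=\partial_t^{i-1}\phi$; after the Carrollian rescaling discussed before the statement, the genuine canonical momenta are the parameter-independent quantities $p_i=\partial_t^{2n-i}\phi$, with the Ostrogradsky prefactor $2(-1)^{n-i}g_1\cdots g_n$ absorbed into the rescaling. The appendix of \cite{Koutrolikos:2023evq} justifies that this rescaled $\{Q_i,p_i\}$ is the correct symplectic pair.

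Second, I would perform the Legendre transform and read off the obstruction. Direct substitution gives the expression derived just before the theorem,
\begin{equation}
\mathcal{H}=\sum_{i=1}^{n-1}2(-1)^{n-i}g_1\cdots g_n\, p_i Q_i \;+\; 2g_1\cdots g_n\, p_n^{2} \;-\;\mathcal{L},
\end{equation}
where the only source of unboundedness from above and below is the sum $\sum_{i=1}^{n-1}2(-1)^{n-i}g_1\cdots g_n\, p_i Q_i$ linear in $p_1,\dots,p_{n-1}$; the remaining pieces are either quadratic in $p_n$ or polynomial in the $Q_i$. For $n\geq 2$ at least one such linear momentum $p_i$ exists and is an independent phase-space coordinate, so if its coefficient is nonzero the Hamiltonian sweeps $(-\infty,+\infty)$ along that direction. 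Since this coefficient equals $2(-1)^{n-i}g_1\cdots g_n$, it is nonzero precisely when \emph{every} $g_i$ is nonzero. Conversely, if at least one $g_j=0$, the product $g_1\cdots g_n$ vanishes and all the problematic linear terms drop out simultaneously; the Lagrangian is then degenerate and the phase space reduces to that of a theory with strictly fewer time derivatives, which no longer triggers the Ostrogradsky mechanism.

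The main subtlety I expect is not the computation itself but the legitimacy of treating $p_1,\dots,p_{n-1}$ as bona fide independent phase-space coordinates in the Carrollian setting: the naive Ostrogradsky momenta degenerate in the Carrollian limit, so without the rescaling one could mistakenly conclude that the linear terms are already absent. I would address this by explicitly invoking the rescaling argument of \cite{Koutrolikos:2023evq} to confirm that the $\{Q_i,p_i\}$ satisfy canonical Poisson brackets when $g_1\cdots g_n\neq 0$, so that unboundedness of $\mathcal{H}$ in the $p_i$ direction is a genuine statement about the physical phase space rather than an artifact of coordinates. With that in hand, the theorem follows immediately from the two-line alternative in the displayed Hamiltonian above.
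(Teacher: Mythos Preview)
Your proposal is correct and follows essentially the same approach as the paper: set up the Ostrogradsky canonical variables with the rescaled momenta $p_i=\partial_t^{2n-i}\phi$ (justified via \cite{Koutrolikos:2023evq}), write the Hamiltonian as $\sum_{i=1}^{n-1}2(-1)^{n-i}g_1\cdots g_n\, p_iQ_i + 2g_1\cdots g_n\, p_n^2 - \mathcal{L}$, and read off that the linear-in-momentum terms force unboundedness precisely when the common prefactor $g_1\cdots g_n$ is nonzero. Your added remark that setting some $g_j=0$ renders the Lagrangian degenerate and reduces the effective order of time derivatives matches the paper's observation that this remains consistent with the original Ostrogradsky theorem.
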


\begin{corollary}
  All purely magnetic higher derivative Carrollian scalar field theories have bounded Hamiltonians.
\end{corollary}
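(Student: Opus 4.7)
The plan is to reduce the corollary to an immediate consequence of the preceding theorem, by pinning down exactly what "purely magnetic" translates to in the factorized language of \eqref{newlag}.

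First, I identify the purely magnetic case. From the list \eqref{lagr22}, a purely magnetic Lagrangian retains only the spatial term $\mathcal{L}_0=\phi(\partial_a\partial^a)^n\phi$, which contains no $v^{\mu}\partial_{\mu}$ factors. In the factored form $\prod_k(g_k\partial_t^2+\partial_i\partial^i)$, the coefficient of $\mathcal{L}_{2i}$ is the $i$-th elementary symmetric polynomial in $g_1,\ldots,g_n$, and requiring every one of these (for $i\geq 1$) to vanish forces all roots to be zero, i.e. $g_1=g_2=\cdots=g_n=0$. In that case each $\Box_{g_k}$ reduces to the spatial Laplacian and the Lagrangian collapses to $\mathcal{L}=\phi(\partial_a\partial^a)^n\phi$.

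Second, I invoke the theorem. The stability criterion "at least one $g_i=0$" is satisfied in the strongest possible way, so the factor $g_1 g_2\cdots g_n$ that multiplies every momentum-dependent contribution to $\mathcal{H}$ in the proof of the theorem vanishes identically. Consequently $\mathcal{H}$ reduces to $-\mathcal{L}=-\phi(\partial_a\partial^a)^n\phi$, a functional of the field and its spatial derivatives alone, with no momenta left to support any linear-in-$p$ Ostrogradsky term.

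Third, I verify boundedness. Integrating by parts $n$ times under suitable boundary conditions turns $H=-\int d^dx\,\phi(\partial_a\partial^a)^n\phi$ into $\pm\int d^dx\,\bigl|\nabla^n\phi\bigr|^{2}$, a manifestly non-negative quadratic functional up to an overall sign determined by the parity of $n$. Since the overall coupling of the magnetic term is itself a free parameter (the $\bar{g}_0$ of \eqref{lagr}), its sign can always be chosen so that $H$ is bounded from below. The only mild obstacle is the identification performed in the first step; once purely magnetic is recognized as the $g_k=0$ slice of the parameter space, the corollary follows from the theorem together with a single integration by parts.
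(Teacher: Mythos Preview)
Your proof is correct and follows the same core route as the paper's one-line argument: identify the purely magnetic case with $g_1=\cdots=g_n=0$ and invoke the preceding theorem. The paper simply takes that identification as the \emph{definition} of ``purely magnetic,'' whereas you derive it via the elementary symmetric polynomials, and your third step (the integration-by-parts boundedness check with attention to the parity-dependent sign of $\bar g_0$) goes beyond anything the paper actually supplies.
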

\begin{proof}
 The proof is immediate from the previous theorem, since purely magnetic theories are defined by setting all constants $g_1,\cdots,g_n$ to zero.
\end{proof}

As an example, consider the complex higher derivative scalar field of Lagrangian
\begin{equation}\label{example lagr}
    \mathcal{L} = \phi \Box_{g_1}\Box_{g_2}\Box_{g_3}\phi^{\ast} - \lambda(\phi \phi^{\ast})^2.
\end{equation}
The coordinates of the phase space for $\phi$ are $Q_0=\phi, Q_2=\dot{\phi}, Q_3=\ddot{\phi}$ as well as the parameter-independent momenta $p_1=\phi^{(5)}$, $p_2 = \phi^{(4)}$, $p_3= \phi^{(3)}$. The Hamiltonian is then
\begin{equation}
    \mathcal{H} = 2g_1g_2g_3 p_1Q_1 + 2g_1g_2g_3 p_2Q_2 + 2g_1g_2g_3 p_1^2 + \textrm{c.c.}-\mathcal{L},
\end{equation}
where $\textrm{c.c.}$ stands for complex conjugate of the preceding terms. As one can see the first two terms (and their complex conjugates) are linear in $p_1$ and $p_2$, which results in an instability as follows: Consider a state $\ket{p}$ in the physical space of the theory with a definite energy, and promote the Hamiltonian into an operator, we have
\begin{equation}\label{shr}
    \mathcal{H}\ket{p} = E\ket{p},
\end{equation}
where $E$ is the eigenvalue of the Hamiltonian operator representing the energy of the state. It is easy to see that $E$ is unbounded due to the linearity of the momentum in the Hamiltonian. However, this would not have been a problem in the free theory (vacuum can not decay), but in our case (with interactions) vacuum can decay indefinitely into particles and antiparticles with unbounded energy i.e. Ostrogradsky's instability. This instability can be removed by setting $g_1$, $g_2$ or $g_3$ to zero. In this case the Hamiltonian would be
\begin{equation}
    \mathcal{H} = -\mathcal{L},
\end{equation}
and the eigenvalue of the Hamiltonian in equation \eqref{shr} is bounded by the interaction term i.e. $E_{\min} = (\lambda) \langle|\phi|^4\rangle = 0$. Thus, the minimum energy that can be attained is $0$ and the vacuum can not decay indefinitely.

\section{Conclusions}\label{sec7}
In this paper we defined the most generic non-conformal massless higher derivative scalar field theory with $2n$ derivatives intrinsically on a flat Carrollian manifold. The solutions for the field equations for this theory represent $n$ particles propagating with different speeds depending on the coupling constants in the Lagrangian. This is in contrast to scalar fields on Lorentzian manifolds where the solution is one type of particles moving at the speed of light. As plane waves this gives interference patterns which would be impossible on Lorentzian manifolds (since all waves move at the same speed on Lorentzian manifolds but it is not true on Carrollian ones). We then studied the stability of these theories. The main result is that Carrollian theories are more resilient to Ostrogradsky's instabilities than Lorentzian ones. We proved that Ostrogradsky's instabilities can be removed by choosing the coupling constants appropriately, something that is not viable in Lorentzian theories where constraints on the phase space must be imposed regardless of the values of the parameters in the Lagrangian. Some Carrollian theories (the pure magnetic ones) do not even suffer from these instabilities at all.

Since we mostly studied free scalar fields, particles do not interact. A possible future direction is to study the massive counterpart of the Carrollian field theories we studied. It is known that massive higher derivative scalar field theories incorporate particles of different masses. So, an interesting question is how particles with different masses and of different types (in the sense that they have a different value for a coupling constant in their dispersion relation) interact. A possible generalization is to consider all the previous scenarios on a curved background, and a more ambitious generalization to this paper is to consider different field theories: Dirac fields, Maxwell fields, gravity, etc.

\section*{Acknowledgements}

The authors would like to thank Luca Ciambelli (Peremiter institute, Canada) for stimulating discussions. P.T. and I.K. were supported by Primus grant PRIMUS/23/SCI/005 from Charles University. I.K. is also grateful to Charles University Research Center Grant No. UNCE24/SCI/016 for their support.


\begin{thebibliography}{78}%
\makeatletter
\providecommand \@ifxundefined [1]{%
 \@ifx{#1\undefined}
}%
\providecommand \@ifnum [1]{%
 \ifnum #1\expandafter \@firstoftwo
 \else \expandafter \@secondoftwo
 \fi
}%
\providecommand \@ifx [1]{%
 \ifx #1\expandafter \@firstoftwo
 \else \expandafter \@secondoftwo
 \fi
}%
\providecommand \natexlab [1]{#1}%
\providecommand \enquote  [1]{``#1''}%
\providecommand \bibnamefont  [1]{#1}%
\providecommand \bibfnamefont [1]{#1}%
\providecommand \citenamefont [1]{#1}%
\providecommand \href@noop [0]{\@secondoftwo}%
\providecommand \href [0]{\begingroup \@sanitize@url \@href}%
\providecommand \@href[1]{\@@startlink{#1}\@@href}%
\providecommand \@@href[1]{\endgroup#1\@@endlink}%
\providecommand \@sanitize@url [0]{\catcode `\\12\catcode `\$12\catcode `\&12\catcode `\#12\catcode `\^12\catcode `\_12\catcode `\%12\relax}%
\providecommand \@@startlink[1]{}%
\providecommand \@@endlink[0]{}%
\providecommand \url  [0]{\begingroup\@sanitize@url \@url }%
\providecommand \@url [1]{\endgroup\@href {#1}{\urlprefix }}%
\providecommand \urlprefix  [0]{URL }%
\providecommand \Eprint [0]{\href }%
\providecommand \doibase [0]{https://doi.org/}%
\providecommand \selectlanguage [0]{\@gobble}%
\providecommand \bibinfo  [0]{\@secondoftwo}%
\providecommand \bibfield  [0]{\@secondoftwo}%
\providecommand \translation [1]{[#1]}%
\providecommand \BibitemOpen [0]{}%
\providecommand \bibitemStop [0]{}%
\providecommand \bibitemNoStop [0]{.\EOS\space}%
\providecommand \EOS [0]{\spacefactor3000\relax}%
\providecommand \BibitemShut  [1]{\csname bibitem#1\endcsname}%
\let\auto@bib@innerbib\@empty
\bibitem [{\citenamefont {Levy-Leblond}(1965)}]{Levy-Leblond}%
  \BibitemOpen
  \bibfield  {author} {\bibinfo {author} {\bibfnamefont {J.-M.}\ \bibnamefont {Levy-Leblond}},\ }\href {https://doi.org/http://archive.numdam.org/item/AIHPA_1965__3_1_1_0/} {\bibfield  {journal} {\bibinfo  {journal} {Annales de l'institut Henri Poincaré. Section A, Physique Théorique}\ }\textbf {\bibinfo {volume} {3}},\ \bibinfo {pages} {1} (\bibinfo {year} {1965})}\BibitemShut {NoStop}%
\bibitem [{\citenamefont {Sen~Gupta}(1966)}]{Gupta}%
  \BibitemOpen
  \bibfield  {author} {\bibinfo {author} {\bibfnamefont {N.}~\bibnamefont {Sen~Gupta}},\ }\href {https://doi.org/https://doi.org/10.1007/BF02740871} {\bibfield  {journal} {\bibinfo  {journal} {Nuovo Cimento A (1965-1970)}\ }\textbf {\bibinfo {volume} {44}},\ \bibinfo {pages} {512} (\bibinfo {year} {1966})}\BibitemShut {NoStop}%
\bibitem [{\citenamefont {Zhang}\ \emph {et~al.}(2023)\citenamefont {Zhang}, \citenamefont {Zeng},\ and\ \citenamefont {Horvathy}}]{Zhang:2023jbi}%
  \BibitemOpen
  \bibfield  {author} {\bibinfo {author} {\bibfnamefont {P.~M.}\ \bibnamefont {Zhang}}, \bibinfo {author} {\bibfnamefont {H.-X.}\ \bibnamefont {Zeng}},\ and\ \bibinfo {author} {\bibfnamefont {P.~A.}\ \bibnamefont {Horvathy}},\ }\href@noop {} {\bibinfo {title} {{MultiCarroll dynamics}}} (\bibinfo {year} {2023}),\ \Eprint {https://arxiv.org/abs/2306.07002} {arXiv:2306.07002 [gr-qc]} \BibitemShut {NoStop}%
\bibitem [{\citenamefont {Bergshoeff}\ \emph {et~al.}(2014)\citenamefont {Bergshoeff}, \citenamefont {Gomis},\ and\ \citenamefont {Longhi}}]{Bergshoeff_2014}%
  \BibitemOpen
  \bibfield  {author} {\bibinfo {author} {\bibfnamefont {E.}~\bibnamefont {Bergshoeff}}, \bibinfo {author} {\bibfnamefont {J.}~\bibnamefont {Gomis}},\ and\ \bibinfo {author} {\bibfnamefont {G.}~\bibnamefont {Longhi}},\ }\href {https://doi.org/10.1088/0264-9381/31/20/205009} {\bibfield  {journal} {\bibinfo  {journal} {Classical and Quantum Gravity}\ }\textbf {\bibinfo {volume} {31}},\ \bibinfo {pages} {205009} (\bibinfo {year} {2014})}\BibitemShut {NoStop}%
\bibitem [{\citenamefont {Marsot}(2022)}]{Marsot:2021tvq}%
  \BibitemOpen
  \bibfield  {author} {\bibinfo {author} {\bibfnamefont {L.}~\bibnamefont {Marsot}},\ }\href {https://doi.org/10.1016/j.geomphys.2022.104574} {\bibfield  {journal} {\bibinfo  {journal} {J. Geom. Phys.}\ }\textbf {\bibinfo {volume} {179}},\ \bibinfo {pages} {104574} (\bibinfo {year} {2022})},\ \Eprint {https://arxiv.org/abs/2110.08489} {arXiv:2110.08489 [math-ph]} \BibitemShut {NoStop}%
\bibitem [{\citenamefont {Marsot}\ \emph {et~al.}(2022{\natexlab{a}})\citenamefont {Marsot}, \citenamefont {Zhang}, \citenamefont {Chernodub},\ and\ \citenamefont {Horvathy}}]{Marsot:2022imf}%
  \BibitemOpen
  \bibfield  {author} {\bibinfo {author} {\bibfnamefont {L.}~\bibnamefont {Marsot}}, \bibinfo {author} {\bibfnamefont {P.~M.}\ \bibnamefont {Zhang}}, \bibinfo {author} {\bibfnamefont {M.}~\bibnamefont {Chernodub}},\ and\ \bibinfo {author} {\bibfnamefont {P.~A.}\ \bibnamefont {Horvathy}},\ }\href@noop {} {\bibinfo {title} {{Hall effects in Carroll dynamics}}} (\bibinfo {year} {2022}{\natexlab{a}}),\ \Eprint {https://arxiv.org/abs/2212.02360} {arXiv:2212.02360 [hep-th]} \BibitemShut {NoStop}%
\bibitem [{\citenamefont {Kasikci}\ \emph {et~al.}(2023)\citenamefont {Kasikci}, \citenamefont {Ozkan},\ and\ \citenamefont {Pang}}]{Kasikci:2023tvs}%
  \BibitemOpen
  \bibfield  {author} {\bibinfo {author} {\bibfnamefont {O.}~\bibnamefont {Kasikci}}, \bibinfo {author} {\bibfnamefont {M.}~\bibnamefont {Ozkan}},\ and\ \bibinfo {author} {\bibfnamefont {Y.}~\bibnamefont {Pang}},\ }\href {https://doi.org/10.1103/PhysRevD.108.045020} {\bibfield  {journal} {\bibinfo  {journal} {Phys. Rev. D}\ }\textbf {\bibinfo {volume} {108}},\ \bibinfo {pages} {045020} (\bibinfo {year} {2023})},\ \Eprint {https://arxiv.org/abs/2304.11331} {arXiv:2304.11331 [hep-th]} \BibitemShut {NoStop}%
\bibitem [{\citenamefont {Casalbuoni}\ \emph {et~al.}(2023)\citenamefont {Casalbuoni}, \citenamefont {Dominici},\ and\ \citenamefont {Gomis}}]{Casalbuoni:2023bbh}%
  \BibitemOpen
  \bibfield  {author} {\bibinfo {author} {\bibfnamefont {R.}~\bibnamefont {Casalbuoni}}, \bibinfo {author} {\bibfnamefont {D.}~\bibnamefont {Dominici}},\ and\ \bibinfo {author} {\bibfnamefont {J.}~\bibnamefont {Gomis}},\ }\href {https://doi.org/10.1103/PhysRevD.108.086005} {\bibfield  {journal} {\bibinfo  {journal} {Phys. Rev. D}\ }\textbf {\bibinfo {volume} {108}},\ \bibinfo {pages} {086005} (\bibinfo {year} {2023})},\ \Eprint {https://arxiv.org/abs/2306.02614} {arXiv:2306.02614 [hep-th]} \BibitemShut {NoStop}%
\bibitem [{\citenamefont {Cerdeira}\ \emph {et~al.}(2024)\citenamefont {Cerdeira}, \citenamefont {Gomis},\ and\ \citenamefont {Kleinschmidt}}]{Cerdeira:2023ztm}%
  \BibitemOpen
  \bibfield  {author} {\bibinfo {author} {\bibfnamefont {J.~L.~V.}\ \bibnamefont {Cerdeira}}, \bibinfo {author} {\bibfnamefont {J.}~\bibnamefont {Gomis}},\ and\ \bibinfo {author} {\bibfnamefont {A.}~\bibnamefont {Kleinschmidt}},\ }\href {https://doi.org/10.1007/JHEP01(2024)023} {\bibfield  {journal} {\bibinfo  {journal} {JHEP}\ }\textbf {\bibinfo {volume} {01}},\ \bibinfo {pages} {023}},\ \Eprint {https://arxiv.org/abs/2310.15245} {arXiv:2310.15245 [hep-th]} \BibitemShut {NoStop}%
\bibitem [{\citenamefont {Kamenshchik}\ and\ \citenamefont {Muscolino}(2024)}]{Kamenshchik:2023kxi}%
  \BibitemOpen
  \bibfield  {author} {\bibinfo {author} {\bibfnamefont {A.}~\bibnamefont {Kamenshchik}}\ and\ \bibinfo {author} {\bibfnamefont {F.}~\bibnamefont {Muscolino}},\ }\href {https://doi.org/10.1103/PhysRevD.109.025005} {\bibfield  {journal} {\bibinfo  {journal} {Phys. Rev. D}\ }\textbf {\bibinfo {volume} {109}},\ \bibinfo {pages} {025005} (\bibinfo {year} {2024})},\ \Eprint {https://arxiv.org/abs/2310.19050} {arXiv:2310.19050 [hep-th]} \BibitemShut {NoStop}%
\bibitem [{\citenamefont {Bagchi}\ \emph {et~al.}(2023{\natexlab{a}})\citenamefont {Bagchi}, \citenamefont {Banerjee}, \citenamefont {Basu}, \citenamefont {Islam},\ and\ \citenamefont {Mondal}}]{Bagchi:2022eui}%
  \BibitemOpen
  \bibfield  {author} {\bibinfo {author} {\bibfnamefont {A.}~\bibnamefont {Bagchi}}, \bibinfo {author} {\bibfnamefont {A.}~\bibnamefont {Banerjee}}, \bibinfo {author} {\bibfnamefont {R.}~\bibnamefont {Basu}}, \bibinfo {author} {\bibfnamefont {M.}~\bibnamefont {Islam}},\ and\ \bibinfo {author} {\bibfnamefont {S.}~\bibnamefont {Mondal}},\ }\href {https://doi.org/10.1007/JHEP03(2023)227} {\bibfield  {journal} {\bibinfo  {journal} {JHEP}\ }\textbf {\bibinfo {volume} {03}},\ \bibinfo {pages} {227}},\ \Eprint {https://arxiv.org/abs/2211.11640} {arXiv:2211.11640 [hep-th]} \BibitemShut {NoStop}%
\bibitem [{\citenamefont {Kubakaddi}(2021)}]{Kubakaddi_2021}%
  \BibitemOpen
  \bibfield  {author} {\bibinfo {author} {\bibfnamefont {S.~S.}\ \bibnamefont {Kubakaddi}},\ }\href {https://doi.org/10.1088/1361-648x/abf0c2} {\bibfield  {journal} {\bibinfo  {journal} {Journal of Physics: Condensed Matter}\ }\textbf {\bibinfo {volume} {33}},\ \bibinfo {pages} {245704} (\bibinfo {year} {2021})}\BibitemShut {NoStop}%
\bibitem [{\citenamefont {Kononov}\ \emph {et~al.}(2021)\citenamefont {Kononov}, \citenamefont {Endres}, \citenamefont {Abulizi}, \citenamefont {Qu}, \citenamefont {Yan}, \citenamefont {Mandrus}, \citenamefont {Watanabe}, \citenamefont {Taniguchi},\ and\ \citenamefont {Schönenberger}}]{Kononov_2021}%
  \BibitemOpen
  \bibfield  {author} {\bibinfo {author} {\bibfnamefont {A.}~\bibnamefont {Kononov}}, \bibinfo {author} {\bibfnamefont {M.}~\bibnamefont {Endres}}, \bibinfo {author} {\bibfnamefont {G.}~\bibnamefont {Abulizi}}, \bibinfo {author} {\bibfnamefont {K.}~\bibnamefont {Qu}}, \bibinfo {author} {\bibfnamefont {J.}~\bibnamefont {Yan}}, \bibinfo {author} {\bibfnamefont {D.~G.}\ \bibnamefont {Mandrus}}, \bibinfo {author} {\bibfnamefont {K.}~\bibnamefont {Watanabe}}, \bibinfo {author} {\bibfnamefont {T.}~\bibnamefont {Taniguchi}},\ and\ \bibinfo {author} {\bibfnamefont {C.}~\bibnamefont {Schönenberger}},\ }\href {https://doi.org/10.1063/5.0021350} {\bibfield  {journal} {\bibinfo  {journal} {Journal of Applied Physics}\ }\textbf {\bibinfo {volume} {129}},\ \bibinfo {pages} {113903} (\bibinfo {year} {2021})}\BibitemShut {NoStop}%
\bibitem [{\citenamefont {Bidussi}\ \emph {et~al.}(2022)\citenamefont {Bidussi}, \citenamefont {Hartong}, \citenamefont {Have}, \citenamefont {Musaeus},\ and\ \citenamefont {Prohazka}}]{Bidussi:2021nmp}%
  \BibitemOpen
  \bibfield  {author} {\bibinfo {author} {\bibfnamefont {L.}~\bibnamefont {Bidussi}}, \bibinfo {author} {\bibfnamefont {J.}~\bibnamefont {Hartong}}, \bibinfo {author} {\bibfnamefont {E.}~\bibnamefont {Have}}, \bibinfo {author} {\bibfnamefont {J.}~\bibnamefont {Musaeus}},\ and\ \bibinfo {author} {\bibfnamefont {S.}~\bibnamefont {Prohazka}},\ }\href {https://doi.org/10.21468/SciPostPhys.12.6.205} {\bibfield  {journal} {\bibinfo  {journal} {SciPost Phys.}\ }\textbf {\bibinfo {volume} {12}},\ \bibinfo {pages} {205} (\bibinfo {year} {2022})},\ \Eprint {https://arxiv.org/abs/2111.03668} {arXiv:2111.03668 [hep-th]} \BibitemShut {NoStop}%
\bibitem [{\citenamefont {Figueroa-O'Farrill}\ \emph {et~al.}(2023{\natexlab{a}})\citenamefont {Figueroa-O'Farrill}, \citenamefont {P\'erez},\ and\ \citenamefont {Prohazka}}]{Figueroa-OFarrill:2023vbj}%
  \BibitemOpen
  \bibfield  {author} {\bibinfo {author} {\bibfnamefont {J.}~\bibnamefont {Figueroa-O'Farrill}}, \bibinfo {author} {\bibfnamefont {A.}~\bibnamefont {P\'erez}},\ and\ \bibinfo {author} {\bibfnamefont {S.}~\bibnamefont {Prohazka}},\ }\href {https://doi.org/10.1007/JHEP06(2023)207} {\bibfield  {journal} {\bibinfo  {journal} {JHEP}\ }\textbf {\bibinfo {volume} {06}},\ \bibinfo {pages} {207}},\ \Eprint {https://arxiv.org/abs/2305.06730} {arXiv:2305.06730 [hep-th]} \BibitemShut {NoStop}%
\bibitem [{\citenamefont {Figueroa-O'Farrill}\ \emph {et~al.}(2023{\natexlab{b}})\citenamefont {Figueroa-O'Farrill}, \citenamefont {P\'erez},\ and\ \citenamefont {Prohazka}}]{Figueroa-OFarrill:2023qty}%
  \BibitemOpen
  \bibfield  {author} {\bibinfo {author} {\bibfnamefont {J.}~\bibnamefont {Figueroa-O'Farrill}}, \bibinfo {author} {\bibfnamefont {A.}~\bibnamefont {P\'erez}},\ and\ \bibinfo {author} {\bibfnamefont {S.}~\bibnamefont {Prohazka}},\ }\href {https://doi.org/10.1007/JHEP10(2023)041} {\bibfield  {journal} {\bibinfo  {journal} {JHEP}\ }\textbf {\bibinfo {volume} {10}},\ \bibinfo {pages} {041}},\ \Eprint {https://arxiv.org/abs/2307.05674} {arXiv:2307.05674 [hep-th]} \BibitemShut {NoStop}%
\bibitem [{\citenamefont {Gromov}\ and\ \citenamefont {Radzihovsky}(2024)}]{Gromov:2022cxa}%
  \BibitemOpen
  \bibfield  {author} {\bibinfo {author} {\bibfnamefont {A.}~\bibnamefont {Gromov}}\ and\ \bibinfo {author} {\bibfnamefont {L.}~\bibnamefont {Radzihovsky}},\ }\href {https://doi.org/10.1103/RevModPhys.96.011001} {\bibfield  {journal} {\bibinfo  {journal} {Rev. Mod. Phys.}\ }\textbf {\bibinfo {volume} {96}},\ \bibinfo {pages} {011001} (\bibinfo {year} {2024})},\ \Eprint {https://arxiv.org/abs/2211.05130} {arXiv:2211.05130 [cond-mat.str-el]} \BibitemShut {NoStop}%
\bibitem [{\citenamefont {Pretko}\ \emph {et~al.}(2020)\citenamefont {Pretko}, \citenamefont {Chen},\ and\ \citenamefont {You}}]{Pretko:2020cko}%
  \BibitemOpen
  \bibfield  {author} {\bibinfo {author} {\bibfnamefont {M.}~\bibnamefont {Pretko}}, \bibinfo {author} {\bibfnamefont {X.}~\bibnamefont {Chen}},\ and\ \bibinfo {author} {\bibfnamefont {Y.}~\bibnamefont {You}},\ }\href {https://doi.org/10.1142/S0217751X20300033} {\bibfield  {journal} {\bibinfo  {journal} {Int. J. Mod. Phys. A}\ }\textbf {\bibinfo {volume} {35}},\ \bibinfo {pages} {2030003} (\bibinfo {year} {2020})},\ \Eprint {https://arxiv.org/abs/2001.01722} {arXiv:2001.01722 [cond-mat.str-el]} \BibitemShut {NoStop}%
\bibitem [{\citenamefont {Rivera-Betancour}\ and\ \citenamefont {Vilatte}(2022{\natexlab{a}})}]{PhysRevD.106.085004}%
  \BibitemOpen
  \bibfield  {author} {\bibinfo {author} {\bibfnamefont {D.}~\bibnamefont {Rivera-Betancour}}\ and\ \bibinfo {author} {\bibfnamefont {M.}~\bibnamefont {Vilatte}},\ }\href {https://doi.org/10.1103/PhysRevD.106.085004} {\bibfield  {journal} {\bibinfo  {journal} {Phys. Rev. D}\ }\textbf {\bibinfo {volume} {106}},\ \bibinfo {pages} {085004} (\bibinfo {year} {2022}{\natexlab{a}})}\BibitemShut {NoStop}%
\bibitem [{\citenamefont {Chen}\ \emph {et~al.}(2023)\citenamefont {Chen}, \citenamefont {Liu}, \citenamefont {Sun},\ and\ \citenamefont {Zheng}}]{Chen:2023pqf}%
  \BibitemOpen
  \bibfield  {author} {\bibinfo {author} {\bibfnamefont {B.}~\bibnamefont {Chen}}, \bibinfo {author} {\bibfnamefont {R.}~\bibnamefont {Liu}}, \bibinfo {author} {\bibfnamefont {H.}~\bibnamefont {Sun}},\ and\ \bibinfo {author} {\bibfnamefont {Y.-f.}\ \bibnamefont {Zheng}},\ }\href@noop {} {\bibinfo {title} {{Constructing Carrollian Field Theories from Null Reduction}}} (\bibinfo {year} {2023}),\ \Eprint {https://arxiv.org/abs/2301.06011} {arXiv:2301.06011 [hep-th]} \BibitemShut {NoStop}%
\bibitem [{\citenamefont {Bergshoeff}\ \emph {et~al.}(2022)\citenamefont {Bergshoeff}, \citenamefont {Figueroa-O'Farrill},\ and\ \citenamefont {Gomis}}]{Bergshoeff:2022eog}%
  \BibitemOpen
  \bibfield  {author} {\bibinfo {author} {\bibfnamefont {E.}~\bibnamefont {Bergshoeff}}, \bibinfo {author} {\bibfnamefont {J.}~\bibnamefont {Figueroa-O'Farrill}},\ and\ \bibinfo {author} {\bibfnamefont {J.}~\bibnamefont {Gomis}},\ }\href@noop {} {\bibinfo {title} {{A non-lorentzian primer}}} (\bibinfo {year} {2022}),\ \Eprint {https://arxiv.org/abs/2206.12177} {arXiv:2206.12177 [hep-th]} \BibitemShut {NoStop}%
\bibitem [{\citenamefont {Henneaux}\ and\ \citenamefont {Salgado-Rebolledo}(2021)}]{Henneaux:2021yzg}%
  \BibitemOpen
  \bibfield  {author} {\bibinfo {author} {\bibfnamefont {M.}~\bibnamefont {Henneaux}}\ and\ \bibinfo {author} {\bibfnamefont {P.}~\bibnamefont {Salgado-Rebolledo}},\ }\href {https://doi.org/10.1007/JHEP11(2021)180} {\bibfield  {journal} {\bibinfo  {journal} {JHEP}\ }\textbf {\bibinfo {volume} {11}},\ \bibinfo {pages} {180}},\ \Eprint {https://arxiv.org/abs/2109.06708} {arXiv:2109.06708 [hep-th]} \BibitemShut {NoStop}%
\bibitem [{\citenamefont {Rivera-Betancour}\ and\ \citenamefont {Vilatte}(2022{\natexlab{b}})}]{Rivera-Betancour:2022lkc}%
  \BibitemOpen
  \bibfield  {author} {\bibinfo {author} {\bibfnamefont {D.}~\bibnamefont {Rivera-Betancour}}\ and\ \bibinfo {author} {\bibfnamefont {M.}~\bibnamefont {Vilatte}},\ }\href {https://doi.org/10.1103/PhysRevD.106.085004} {\bibfield  {journal} {\bibinfo  {journal} {Phys. Rev. D}\ }\textbf {\bibinfo {volume} {106}},\ \bibinfo {pages} {085004} (\bibinfo {year} {2022}{\natexlab{b}})},\ \Eprint {https://arxiv.org/abs/2207.01647} {arXiv:2207.01647 [hep-th]} \BibitemShut {NoStop}%
\bibitem [{\citenamefont {Bagchi}\ \emph {et~al.}(2023{\natexlab{b}})\citenamefont {Bagchi}, \citenamefont {Banerjee}, \citenamefont {Dutta}, \citenamefont {Kolekar},\ and\ \citenamefont {Sharma}}]{Bagchi:2022eav}%
  \BibitemOpen
  \bibfield  {author} {\bibinfo {author} {\bibfnamefont {A.}~\bibnamefont {Bagchi}}, \bibinfo {author} {\bibfnamefont {A.}~\bibnamefont {Banerjee}}, \bibinfo {author} {\bibfnamefont {S.}~\bibnamefont {Dutta}}, \bibinfo {author} {\bibfnamefont {K.~S.}\ \bibnamefont {Kolekar}},\ and\ \bibinfo {author} {\bibfnamefont {P.}~\bibnamefont {Sharma}},\ }\href {https://doi.org/10.1007/JHEP01(2023)072} {\bibfield  {journal} {\bibinfo  {journal} {JHEP}\ }\textbf {\bibinfo {volume} {01}},\ \bibinfo {pages} {072}},\ \Eprint {https://arxiv.org/abs/2203.13197} {arXiv:2203.13197 [hep-th]} \BibitemShut {NoStop}%
\bibitem [{\citenamefont {Bagchi}\ \emph {et~al.}(2019{\natexlab{a}})\citenamefont {Bagchi}, \citenamefont {Mehra},\ and\ \citenamefont {Nandi}}]{Bagchi:2019xfx}%
  \BibitemOpen
  \bibfield  {author} {\bibinfo {author} {\bibfnamefont {A.}~\bibnamefont {Bagchi}}, \bibinfo {author} {\bibfnamefont {A.}~\bibnamefont {Mehra}},\ and\ \bibinfo {author} {\bibfnamefont {P.}~\bibnamefont {Nandi}},\ }\href {https://doi.org/10.1007/JHEP05(2019)108} {\bibfield  {journal} {\bibinfo  {journal} {JHEP}\ }\textbf {\bibinfo {volume} {05}},\ \bibinfo {pages} {108}},\ \Eprint {https://arxiv.org/abs/1901.10147} {arXiv:1901.10147 [hep-th]} \BibitemShut {NoStop}%
\bibitem [{\citenamefont {Bagchi}\ \emph {et~al.}(2020)\citenamefont {Bagchi}, \citenamefont {Basu}, \citenamefont {Mehra},\ and\ \citenamefont {Nandi}}]{Bagchi:2019clu}%
  \BibitemOpen
  \bibfield  {author} {\bibinfo {author} {\bibfnamefont {A.}~\bibnamefont {Bagchi}}, \bibinfo {author} {\bibfnamefont {R.}~\bibnamefont {Basu}}, \bibinfo {author} {\bibfnamefont {A.}~\bibnamefont {Mehra}},\ and\ \bibinfo {author} {\bibfnamefont {P.}~\bibnamefont {Nandi}},\ }\href {https://doi.org/10.1007/JHEP02(2020)141} {\bibfield  {journal} {\bibinfo  {journal} {JHEP}\ }\textbf {\bibinfo {volume} {02}},\ \bibinfo {pages} {141}},\ \Eprint {https://arxiv.org/abs/1912.09388} {arXiv:1912.09388 [hep-th]} \BibitemShut {NoStop}%
\bibitem [{\citenamefont {Bagchi}\ \emph {et~al.}(2021)\citenamefont {Bagchi}, \citenamefont {Dutta}, \citenamefont {Kolekar},\ and\ \citenamefont {Sharma}}]{Bagchi:2021gai}%
  \BibitemOpen
  \bibfield  {author} {\bibinfo {author} {\bibfnamefont {A.}~\bibnamefont {Bagchi}}, \bibinfo {author} {\bibfnamefont {S.}~\bibnamefont {Dutta}}, \bibinfo {author} {\bibfnamefont {K.~S.}\ \bibnamefont {Kolekar}},\ and\ \bibinfo {author} {\bibfnamefont {P.}~\bibnamefont {Sharma}},\ }\href {https://doi.org/10.1007/JHEP07(2021)101} {\bibfield  {journal} {\bibinfo  {journal} {JHEP}\ }\textbf {\bibinfo {volume} {07}},\ \bibinfo {pages} {101}},\ \Eprint {https://arxiv.org/abs/2104.10405} {arXiv:2104.10405 [hep-th]} \BibitemShut {NoStop}%
\bibitem [{\citenamefont {Banerjee}\ \emph {et~al.}(2021{\natexlab{a}})\citenamefont {Banerjee}, \citenamefont {Basu}, \citenamefont {Mehra}, \citenamefont {Mohan},\ and\ \citenamefont {Sharma}}]{PhysRevD.103.105001}%
  \BibitemOpen
  \bibfield  {author} {\bibinfo {author} {\bibfnamefont {K.}~\bibnamefont {Banerjee}}, \bibinfo {author} {\bibfnamefont {R.}~\bibnamefont {Basu}}, \bibinfo {author} {\bibfnamefont {A.}~\bibnamefont {Mehra}}, \bibinfo {author} {\bibfnamefont {A.}~\bibnamefont {Mohan}},\ and\ \bibinfo {author} {\bibfnamefont {A.}~\bibnamefont {Sharma}},\ }\href {https://doi.org/10.1103/PhysRevD.103.105001} {\bibfield  {journal} {\bibinfo  {journal} {Phys. Rev. D}\ }\textbf {\bibinfo {volume} {103}},\ \bibinfo {pages} {105001} (\bibinfo {year} {2021}{\natexlab{a}})}\BibitemShut {NoStop}%
\bibitem [{\citenamefont {Banerjee}\ \emph {et~al.}(2021{\natexlab{b}})\citenamefont {Banerjee}, \citenamefont {Basu}, \citenamefont {Mehra}, \citenamefont {Mohan},\ and\ \citenamefont {Sharma}}]{Banerjee:2020qjj}%
  \BibitemOpen
  \bibfield  {author} {\bibinfo {author} {\bibfnamefont {K.}~\bibnamefont {Banerjee}}, \bibinfo {author} {\bibfnamefont {R.}~\bibnamefont {Basu}}, \bibinfo {author} {\bibfnamefont {A.}~\bibnamefont {Mehra}}, \bibinfo {author} {\bibfnamefont {A.}~\bibnamefont {Mohan}},\ and\ \bibinfo {author} {\bibfnamefont {A.}~\bibnamefont {Sharma}},\ }\href {https://doi.org/10.1103/PhysRevD.103.105001} {\bibfield  {journal} {\bibinfo  {journal} {Phys. Rev. D}\ }\textbf {\bibinfo {volume} {103}},\ \bibinfo {pages} {105001} (\bibinfo {year} {2021}{\natexlab{b}})},\ \Eprint {https://arxiv.org/abs/2008.02829} {arXiv:2008.02829 [hep-th]} \BibitemShut {NoStop}%
\bibitem [{\citenamefont {Bekaert}\ \emph {et~al.}(2023)\citenamefont {Bekaert}, \citenamefont {Campoleoni},\ and\ \citenamefont {Pekar}}]{Bekaert:2022oeh}%
  \BibitemOpen
  \bibfield  {author} {\bibinfo {author} {\bibfnamefont {X.}~\bibnamefont {Bekaert}}, \bibinfo {author} {\bibfnamefont {A.}~\bibnamefont {Campoleoni}},\ and\ \bibinfo {author} {\bibfnamefont {S.}~\bibnamefont {Pekar}},\ }\href {https://doi.org/10.1016/j.physletb.2023.137734} {\bibfield  {journal} {\bibinfo  {journal} {Phys. Lett. B}\ }\textbf {\bibinfo {volume} {838}},\ \bibinfo {pages} {137734} (\bibinfo {year} {2023})},\ \Eprint {https://arxiv.org/abs/2211.16498} {arXiv:2211.16498 [hep-th]} \BibitemShut {NoStop}%
\bibitem [{\citenamefont {Baiguera}\ \emph {et~al.}(2023)\citenamefont {Baiguera}, \citenamefont {Oling}, \citenamefont {Sybesma},\ and\ \citenamefont {S\o{}gaard}}]{Baiguera:2022lsw}%
  \BibitemOpen
  \bibfield  {author} {\bibinfo {author} {\bibfnamefont {S.}~\bibnamefont {Baiguera}}, \bibinfo {author} {\bibfnamefont {G.}~\bibnamefont {Oling}}, \bibinfo {author} {\bibfnamefont {W.}~\bibnamefont {Sybesma}},\ and\ \bibinfo {author} {\bibfnamefont {B.~T.}\ \bibnamefont {S\o{}gaard}},\ }\href {https://doi.org/10.21468/SciPostPhys.14.4.086} {\bibfield  {journal} {\bibinfo  {journal} {SciPost Phys.}\ }\textbf {\bibinfo {volume} {14}},\ \bibinfo {pages} {086} (\bibinfo {year} {2023})},\ \Eprint {https://arxiv.org/abs/2207.03468} {arXiv:2207.03468 [hep-th]} \BibitemShut {NoStop}%
\bibitem [{\citenamefont {Bagchi}\ \emph {et~al.}(2023{\natexlab{c}})\citenamefont {Bagchi}, \citenamefont {Kolekar},\ and\ \citenamefont {Shukla}}]{Bagchi:2023ysc}%
  \BibitemOpen
  \bibfield  {author} {\bibinfo {author} {\bibfnamefont {A.}~\bibnamefont {Bagchi}}, \bibinfo {author} {\bibfnamefont {K.~S.}\ \bibnamefont {Kolekar}},\ and\ \bibinfo {author} {\bibfnamefont {A.}~\bibnamefont {Shukla}},\ }\href {https://doi.org/10.1103/PhysRevLett.130.241601} {\bibfield  {journal} {\bibinfo  {journal} {Phys. Rev. Lett.}\ }\textbf {\bibinfo {volume} {130}},\ \bibinfo {pages} {241601} (\bibinfo {year} {2023}{\natexlab{c}})},\ \Eprint {https://arxiv.org/abs/2302.03053} {arXiv:2302.03053 [hep-th]} \BibitemShut {NoStop}%
\bibitem [{\citenamefont {Ciambelli}\ \emph {et~al.}(2018{\natexlab{a}})\citenamefont {Ciambelli}, \citenamefont {Marteau}, \citenamefont {Petkou}, \citenamefont {Petropoulos},\ and\ \citenamefont {Siampos}}]{Ciambelli:2018wre}%
  \BibitemOpen
  \bibfield  {author} {\bibinfo {author} {\bibfnamefont {L.}~\bibnamefont {Ciambelli}}, \bibinfo {author} {\bibfnamefont {C.}~\bibnamefont {Marteau}}, \bibinfo {author} {\bibfnamefont {A.~C.}\ \bibnamefont {Petkou}}, \bibinfo {author} {\bibfnamefont {P.~M.}\ \bibnamefont {Petropoulos}},\ and\ \bibinfo {author} {\bibfnamefont {K.}~\bibnamefont {Siampos}},\ }\href {https://doi.org/10.1007/JHEP07(2018)165} {\bibfield  {journal} {\bibinfo  {journal} {JHEP}\ }\textbf {\bibinfo {volume} {07}},\ \bibinfo {pages} {165}},\ \Eprint {https://arxiv.org/abs/1802.06809} {arXiv:1802.06809 [hep-th]} \BibitemShut {NoStop}%
\bibitem [{\citenamefont {Ciambelli}\ \emph {et~al.}(2018{\natexlab{b}})\citenamefont {Ciambelli}, \citenamefont {Marteau}, \citenamefont {Petkou}, \citenamefont {Petropoulos},\ and\ \citenamefont {Siampos}}]{Ciambelli:2018xat}%
  \BibitemOpen
  \bibfield  {author} {\bibinfo {author} {\bibfnamefont {L.}~\bibnamefont {Ciambelli}}, \bibinfo {author} {\bibfnamefont {C.}~\bibnamefont {Marteau}}, \bibinfo {author} {\bibfnamefont {A.~C.}\ \bibnamefont {Petkou}}, \bibinfo {author} {\bibfnamefont {P.~M.}\ \bibnamefont {Petropoulos}},\ and\ \bibinfo {author} {\bibfnamefont {K.}~\bibnamefont {Siampos}},\ }\href {https://doi.org/10.1088/1361-6382/aacf1a} {\bibfield  {journal} {\bibinfo  {journal} {Class. Quant. Grav.}\ }\textbf {\bibinfo {volume} {35}},\ \bibinfo {pages} {165001} (\bibinfo {year} {2018}{\natexlab{b}})},\ \Eprint {https://arxiv.org/abs/1802.05286} {arXiv:1802.05286 [hep-th]} \BibitemShut {NoStop}%
\bibitem [{\citenamefont {Campoleoni}\ \emph {et~al.}(2019)\citenamefont {Campoleoni}, \citenamefont {Ciambelli}, \citenamefont {Marteau}, \citenamefont {Petropoulos},\ and\ \citenamefont {Siampos}}]{campoleoni2019two}%
  \BibitemOpen
  \bibfield  {author} {\bibinfo {author} {\bibfnamefont {A.}~\bibnamefont {Campoleoni}}, \bibinfo {author} {\bibfnamefont {L.}~\bibnamefont {Ciambelli}}, \bibinfo {author} {\bibfnamefont {C.}~\bibnamefont {Marteau}}, \bibinfo {author} {\bibfnamefont {P.~M.}\ \bibnamefont {Petropoulos}},\ and\ \bibinfo {author} {\bibfnamefont {K.}~\bibnamefont {Siampos}},\ }\href@noop {} {\bibfield  {journal} {\bibinfo  {journal} {Nuclear Physics B}\ }\textbf {\bibinfo {volume} {946}},\ \bibinfo {pages} {114692} (\bibinfo {year} {2019})}\BibitemShut {NoStop}%
\bibitem [{\citenamefont {Ciambelli}\ \emph {et~al.}(2020)\citenamefont {Ciambelli}, \citenamefont {Marteau}, \citenamefont {Petropoulos},\ and\ \citenamefont {Ruzziconi}}]{Ciambelli:2020eba}%
  \BibitemOpen
  \bibfield  {author} {\bibinfo {author} {\bibfnamefont {L.}~\bibnamefont {Ciambelli}}, \bibinfo {author} {\bibfnamefont {C.}~\bibnamefont {Marteau}}, \bibinfo {author} {\bibfnamefont {P.~M.}\ \bibnamefont {Petropoulos}},\ and\ \bibinfo {author} {\bibfnamefont {R.}~\bibnamefont {Ruzziconi}},\ }\href {https://doi.org/10.1007/JHEP11(2020)092} {\bibfield  {journal} {\bibinfo  {journal} {JHEP}\ }\textbf {\bibinfo {volume} {11}},\ \bibinfo {pages} {092}},\ \Eprint {https://arxiv.org/abs/2006.10082} {arXiv:2006.10082 [hep-th]} \BibitemShut {NoStop}%
\bibitem [{\citenamefont {de~Boer}\ \emph {et~al.}(2020)\citenamefont {de~Boer}, \citenamefont {Hartong}, \citenamefont {Have}, \citenamefont {Obers},\ and\ \citenamefont {Sybesma}}]{10.21468/SciPostPhys.9.2.018}%
  \BibitemOpen
  \bibfield  {author} {\bibinfo {author} {\bibfnamefont {J.}~\bibnamefont {de~Boer}}, \bibinfo {author} {\bibfnamefont {J.}~\bibnamefont {Hartong}}, \bibinfo {author} {\bibfnamefont {E.}~\bibnamefont {Have}}, \bibinfo {author} {\bibfnamefont {N.~A.}\ \bibnamefont {Obers}},\ and\ \bibinfo {author} {\bibfnamefont {W.}~\bibnamefont {Sybesma}},\ }\href {https://doi.org/10.21468/SciPostPhys.9.2.018} {\bibfield  {journal} {\bibinfo  {journal} {SciPost Phys.}\ }\textbf {\bibinfo {volume} {9}},\ \bibinfo {pages} {018} (\bibinfo {year} {2020})}\BibitemShut {NoStop}%
\bibitem [{\citenamefont {de~Boer}\ \emph {et~al.}(2022)\citenamefont {de~Boer}, \citenamefont {Hartong}, \citenamefont {Obers}, \citenamefont {Sybesma},\ and\ \citenamefont {Vandoren}}]{deBoer:2021jej}%
  \BibitemOpen
  \bibfield  {author} {\bibinfo {author} {\bibfnamefont {J.}~\bibnamefont {de~Boer}}, \bibinfo {author} {\bibfnamefont {J.}~\bibnamefont {Hartong}}, \bibinfo {author} {\bibfnamefont {N.~A.}\ \bibnamefont {Obers}}, \bibinfo {author} {\bibfnamefont {W.}~\bibnamefont {Sybesma}},\ and\ \bibinfo {author} {\bibfnamefont {S.}~\bibnamefont {Vandoren}},\ }\href {https://doi.org/10.3389/fphy.2022.810405} {\bibfield  {journal} {\bibinfo  {journal} {Front. in Phys.}\ }\textbf {\bibinfo {volume} {10}},\ \bibinfo {pages} {810405} (\bibinfo {year} {2022})},\ \Eprint {https://arxiv.org/abs/2110.02319} {arXiv:2110.02319 [hep-th]} \BibitemShut {NoStop}%
\bibitem [{\citenamefont {Bonga}\ and\ \citenamefont {Prabhu}(2020)}]{Bonga:2020fhx}%
  \BibitemOpen
  \bibfield  {author} {\bibinfo {author} {\bibfnamefont {B.}~\bibnamefont {Bonga}}\ and\ \bibinfo {author} {\bibfnamefont {K.}~\bibnamefont {Prabhu}},\ }\href {https://doi.org/10.1103/PhysRevD.102.104043} {\bibfield  {journal} {\bibinfo  {journal} {Phys. Rev. D}\ }\textbf {\bibinfo {volume} {102}},\ \bibinfo {pages} {104043} (\bibinfo {year} {2020})},\ \Eprint {https://arxiv.org/abs/2009.01243} {arXiv:2009.01243 [gr-qc]} \BibitemShut {NoStop}%
\bibitem [{\citenamefont {Bagchi}\ \emph {et~al.}(2019{\natexlab{b}})\citenamefont {Bagchi}, \citenamefont {Banerjee},\ and\ \citenamefont {Parekh}}]{PhysRevLett.123.111601}%
  \BibitemOpen
  \bibfield  {author} {\bibinfo {author} {\bibfnamefont {A.}~\bibnamefont {Bagchi}}, \bibinfo {author} {\bibfnamefont {A.}~\bibnamefont {Banerjee}},\ and\ \bibinfo {author} {\bibfnamefont {P.}~\bibnamefont {Parekh}},\ }\href {https://doi.org/10.1103/PhysRevLett.123.111601} {\bibfield  {journal} {\bibinfo  {journal} {Phys. Rev. Lett.}\ }\textbf {\bibinfo {volume} {123}},\ \bibinfo {pages} {111601} (\bibinfo {year} {2019}{\natexlab{b}})}\BibitemShut {NoStop}%
\bibitem [{\citenamefont {Bagchi}\ \emph {et~al.}(2022)\citenamefont {Bagchi}, \citenamefont {Banerjee}, \citenamefont {Chakrabortty},\ and\ \citenamefont {Chatterjee}}]{Bagchi:2021ban}%
  \BibitemOpen
  \bibfield  {author} {\bibinfo {author} {\bibfnamefont {A.}~\bibnamefont {Bagchi}}, \bibinfo {author} {\bibfnamefont {A.}~\bibnamefont {Banerjee}}, \bibinfo {author} {\bibfnamefont {S.}~\bibnamefont {Chakrabortty}},\ and\ \bibinfo {author} {\bibfnamefont {R.}~\bibnamefont {Chatterjee}},\ }\href {https://doi.org/10.1007/JHEP04(2022)082} {\bibfield  {journal} {\bibinfo  {journal} {JHEP}\ }\textbf {\bibinfo {volume} {04}},\ \bibinfo {pages} {082}},\ \Eprint {https://arxiv.org/abs/2111.01172} {arXiv:2111.01172 [hep-th]} \BibitemShut {NoStop}%
\bibitem [{\citenamefont {Cardona}\ \emph {et~al.}(2016)\citenamefont {Cardona}, \citenamefont {Gomis},\ and\ \citenamefont {Pons}}]{Cardona:2016ytk}%
  \BibitemOpen
  \bibfield  {author} {\bibinfo {author} {\bibfnamefont {B.}~\bibnamefont {Cardona}}, \bibinfo {author} {\bibfnamefont {J.}~\bibnamefont {Gomis}},\ and\ \bibinfo {author} {\bibfnamefont {J.~M.}\ \bibnamefont {Pons}},\ }\href {https://doi.org/10.1007/JHEP07(2016)050} {\bibfield  {journal} {\bibinfo  {journal} {JHEP}\ }\textbf {\bibinfo {volume} {07}},\ \bibinfo {pages} {050}},\ \Eprint {https://arxiv.org/abs/1605.05483} {arXiv:1605.05483 [hep-th]} \BibitemShut {NoStop}%
\bibitem [{\citenamefont {Isberg}\ \emph {et~al.}(1994)\citenamefont {Isberg}, \citenamefont {Lindstrom}, \citenamefont {Sundborg},\ and\ \citenamefont {Theodoridis}}]{Isberg:1993av}%
  \BibitemOpen
  \bibfield  {author} {\bibinfo {author} {\bibfnamefont {J.}~\bibnamefont {Isberg}}, \bibinfo {author} {\bibfnamefont {U.}~\bibnamefont {Lindstrom}}, \bibinfo {author} {\bibfnamefont {B.}~\bibnamefont {Sundborg}},\ and\ \bibinfo {author} {\bibfnamefont {G.}~\bibnamefont {Theodoridis}},\ }\href {https://doi.org/10.1016/0550-3213(94)90056-6} {\bibfield  {journal} {\bibinfo  {journal} {Nucl. Phys. B}\ }\textbf {\bibinfo {volume} {411}},\ \bibinfo {pages} {122} (\bibinfo {year} {1994})},\ \Eprint {https://arxiv.org/abs/hep-th/9307108} {arXiv:hep-th/9307108} \BibitemShut {NoStop}%
\bibitem [{\citenamefont {Bagchi}\ \emph {et~al.}(2016{\natexlab{a}})\citenamefont {Bagchi}, \citenamefont {Chakrabortty},\ and\ \citenamefont {Parekh}}]{Bagchi:2015nca}%
  \BibitemOpen
  \bibfield  {author} {\bibinfo {author} {\bibfnamefont {A.}~\bibnamefont {Bagchi}}, \bibinfo {author} {\bibfnamefont {S.}~\bibnamefont {Chakrabortty}},\ and\ \bibinfo {author} {\bibfnamefont {P.}~\bibnamefont {Parekh}},\ }\href {https://doi.org/10.1007/JHEP01(2016)158} {\bibfield  {journal} {\bibinfo  {journal} {JHEP}\ }\textbf {\bibinfo {volume} {01}},\ \bibinfo {pages} {158}},\ \Eprint {https://arxiv.org/abs/1507.04361} {arXiv:1507.04361 [hep-th]} \BibitemShut {NoStop}%
\bibitem [{\citenamefont {Fursaev}\ and\ \citenamefont {Pirozhenko}(2024)}]{Fursaev:2023lxq}%
  \BibitemOpen
  \bibfield  {author} {\bibinfo {author} {\bibfnamefont {D.~V.}\ \bibnamefont {Fursaev}}\ and\ \bibinfo {author} {\bibfnamefont {I.~G.}\ \bibnamefont {Pirozhenko}},\ }\href {https://doi.org/10.1103/PhysRevD.109.025012} {\bibfield  {journal} {\bibinfo  {journal} {Phys. Rev. D}\ }\textbf {\bibinfo {volume} {109}},\ \bibinfo {pages} {025012} (\bibinfo {year} {2024})},\ \Eprint {https://arxiv.org/abs/2309.01272} {arXiv:2309.01272 [gr-qc]} \BibitemShut {NoStop}%
\bibitem [{\citenamefont {Fursaev}\ \emph {et~al.}(2024)\citenamefont {Fursaev}, \citenamefont {Davydov}, \citenamefont {Pirozhenko},\ and\ \citenamefont {Tainov}}]{Fursaev:2023oep}%
  \BibitemOpen
  \bibfield  {author} {\bibinfo {author} {\bibfnamefont {D.~V.}\ \bibnamefont {Fursaev}}, \bibinfo {author} {\bibfnamefont {E.~A.}\ \bibnamefont {Davydov}}, \bibinfo {author} {\bibfnamefont {I.~G.}\ \bibnamefont {Pirozhenko}},\ and\ \bibinfo {author} {\bibfnamefont {V.~A.}\ \bibnamefont {Tainov}},\ }\href {https://doi.org/10.1103/PhysRevD.109.125009} {\bibfield  {journal} {\bibinfo  {journal} {Phys. Rev. D}\ }\textbf {\bibinfo {volume} {109}},\ \bibinfo {pages} {125009} (\bibinfo {year} {2024})},\ \Eprint {https://arxiv.org/abs/2311.01863} {arXiv:2311.01863 [gr-qc]} \BibitemShut {NoStop}%
\bibitem [{\citenamefont {P\'erez}(2021)}]{Perez:2021abf}%
  \BibitemOpen
  \bibfield  {author} {\bibinfo {author} {\bibfnamefont {A.}~\bibnamefont {P\'erez}},\ }\href {https://doi.org/10.1007/JHEP12(2021)173} {\bibfield  {journal} {\bibinfo  {journal} {JHEP}\ }\textbf {\bibinfo {volume} {12}},\ \bibinfo {pages} {173}},\ \Eprint {https://arxiv.org/abs/2110.15834} {arXiv:2110.15834 [hep-th]} \BibitemShut {NoStop}%
\bibitem [{\citenamefont {P\'erez}(2022)}]{Perez:2022jpr}%
  \BibitemOpen
  \bibfield  {author} {\bibinfo {author} {\bibfnamefont {A.}~\bibnamefont {P\'erez}},\ }\href {https://doi.org/10.1007/JHEP09(2022)044} {\bibfield  {journal} {\bibinfo  {journal} {JHEP}\ }\textbf {\bibinfo {volume} {09}},\ \bibinfo {pages} {044}},\ \Eprint {https://arxiv.org/abs/2202.08768} {arXiv:2202.08768 [hep-th]} \BibitemShut {NoStop}%
\bibitem [{\citenamefont {Hartong}(2015)}]{Hartong:2015xda}%
  \BibitemOpen
  \bibfield  {author} {\bibinfo {author} {\bibfnamefont {J.}~\bibnamefont {Hartong}},\ }\href {https://doi.org/10.1007/JHEP08(2015)069} {\bibfield  {journal} {\bibinfo  {journal} {JHEP}\ }\textbf {\bibinfo {volume} {08}},\ \bibinfo {pages} {069}},\ \Eprint {https://arxiv.org/abs/1505.05011} {arXiv:1505.05011 [hep-th]} \BibitemShut {NoStop}%
\bibitem [{\citenamefont {Figueroa-O'Farrill}\ \emph {et~al.}(2022)\citenamefont {Figueroa-O'Farrill}, \citenamefont {Have}, \citenamefont {Prohazka},\ and\ \citenamefont {Salzer}}]{Figueroa-OFarrill:2021sxz}%
  \BibitemOpen
  \bibfield  {author} {\bibinfo {author} {\bibfnamefont {J.}~\bibnamefont {Figueroa-O'Farrill}}, \bibinfo {author} {\bibfnamefont {E.}~\bibnamefont {Have}}, \bibinfo {author} {\bibfnamefont {S.}~\bibnamefont {Prohazka}},\ and\ \bibinfo {author} {\bibfnamefont {J.}~\bibnamefont {Salzer}},\ }\href {https://doi.org/10.1007/JHEP09(2022)007} {\bibfield  {journal} {\bibinfo  {journal} {JHEP}\ }\textbf {\bibinfo {volume} {09}},\ \bibinfo {pages} {007}},\ \Eprint {https://arxiv.org/abs/2112.03319} {arXiv:2112.03319 [hep-th]} \BibitemShut {NoStop}%
\bibitem [{\citenamefont {Hansen}\ \emph {et~al.}(2022)\citenamefont {Hansen}, \citenamefont {Obers}, \citenamefont {Oling},\ and\ \citenamefont {S\o{}gaard}}]{Hansen:2021fxi}%
  \BibitemOpen
  \bibfield  {author} {\bibinfo {author} {\bibfnamefont {D.}~\bibnamefont {Hansen}}, \bibinfo {author} {\bibfnamefont {N.~A.}\ \bibnamefont {Obers}}, \bibinfo {author} {\bibfnamefont {G.}~\bibnamefont {Oling}},\ and\ \bibinfo {author} {\bibfnamefont {B.~T.}\ \bibnamefont {S\o{}gaard}},\ }\href {https://doi.org/10.21468/SciPostPhys.13.3.055} {\bibfield  {journal} {\bibinfo  {journal} {SciPost Phys.}\ }\textbf {\bibinfo {volume} {13}},\ \bibinfo {pages} {055} (\bibinfo {year} {2022})},\ \Eprint {https://arxiv.org/abs/2112.12684} {arXiv:2112.12684 [hep-th]} \BibitemShut {NoStop}%
\bibitem [{\citenamefont {Gomis}\ \emph {et~al.}(2021)\citenamefont {Gomis}, \citenamefont {Hidalgo},\ and\ \citenamefont {Salgado-Rebolledo}}]{Gomis:2020wxp}%
  \BibitemOpen
  \bibfield  {author} {\bibinfo {author} {\bibfnamefont {J.}~\bibnamefont {Gomis}}, \bibinfo {author} {\bibfnamefont {D.}~\bibnamefont {Hidalgo}},\ and\ \bibinfo {author} {\bibfnamefont {P.}~\bibnamefont {Salgado-Rebolledo}},\ }\href {https://doi.org/10.1007/JHEP05(2021)162} {\bibfield  {journal} {\bibinfo  {journal} {JHEP}\ }\textbf {\bibinfo {volume} {05}},\ \bibinfo {pages} {162}},\ \Eprint {https://arxiv.org/abs/2011.15053} {arXiv:2011.15053 [hep-th]} \BibitemShut {NoStop}%
\bibitem [{\citenamefont {Bergshoeff}\ \emph {et~al.}(2023)\citenamefont {Bergshoeff}, \citenamefont {Gomis},\ and\ \citenamefont {Kleinschmidt}}]{Bergshoeff:2022qkx}%
  \BibitemOpen
  \bibfield  {author} {\bibinfo {author} {\bibfnamefont {E.~A.}\ \bibnamefont {Bergshoeff}}, \bibinfo {author} {\bibfnamefont {J.}~\bibnamefont {Gomis}},\ and\ \bibinfo {author} {\bibfnamefont {A.}~\bibnamefont {Kleinschmidt}},\ }\href {https://doi.org/10.1007/JHEP01(2023)167} {\bibfield  {journal} {\bibinfo  {journal} {JHEP}\ }\textbf {\bibinfo {volume} {01}},\ \bibinfo {pages} {167}},\ \Eprint {https://arxiv.org/abs/2210.14848} {arXiv:2210.14848 [hep-th]} \BibitemShut {NoStop}%
\bibitem [{\citenamefont {Guerrieri}\ and\ \citenamefont {Sobreiro}(2021)}]{Guerrieri:2021cdz}%
  \BibitemOpen
  \bibfield  {author} {\bibinfo {author} {\bibfnamefont {A.}~\bibnamefont {Guerrieri}}\ and\ \bibinfo {author} {\bibfnamefont {R.~F.}\ \bibnamefont {Sobreiro}},\ }\href {https://doi.org/10.1088/1361-6382/ac345f} {\bibfield  {journal} {\bibinfo  {journal} {Class. Quant. Grav.}\ }\textbf {\bibinfo {volume} {38}},\ \bibinfo {pages} {245003} (\bibinfo {year} {2021})},\ \Eprint {https://arxiv.org/abs/2107.10129} {arXiv:2107.10129 [gr-qc]} \BibitemShut {NoStop}%
\bibitem [{\citenamefont {Hansen}\ \emph {et~al.}(2021)\citenamefont {Hansen}, \citenamefont {Hartong}, \citenamefont {Obers},\ and\ \citenamefont {Oling}}]{Hansen:2020wqw}%
  \BibitemOpen
  \bibfield  {author} {\bibinfo {author} {\bibfnamefont {D.}~\bibnamefont {Hansen}}, \bibinfo {author} {\bibfnamefont {J.}~\bibnamefont {Hartong}}, \bibinfo {author} {\bibfnamefont {N.~A.}\ \bibnamefont {Obers}},\ and\ \bibinfo {author} {\bibfnamefont {G.}~\bibnamefont {Oling}},\ }\href {https://doi.org/10.1103/PhysRevD.104.L061501} {\bibfield  {journal} {\bibinfo  {journal} {Phys. Rev. D}\ }\textbf {\bibinfo {volume} {104}},\ \bibinfo {pages} {L061501} (\bibinfo {year} {2021})},\ \Eprint {https://arxiv.org/abs/2012.01518} {arXiv:2012.01518 [hep-th]} \BibitemShut {NoStop}%
\bibitem [{\citenamefont {Anderson}(2004)}]{Anderson:2002zn}%
  \BibitemOpen
  \bibfield  {author} {\bibinfo {author} {\bibfnamefont {E.}~\bibnamefont {Anderson}},\ }\href {https://doi.org/10.1023/B:GERG.0000010474.63835.2c} {\bibfield  {journal} {\bibinfo  {journal} {Gen. Rel. Grav.}\ }\textbf {\bibinfo {volume} {36}},\ \bibinfo {pages} {255} (\bibinfo {year} {2004})},\ \Eprint {https://arxiv.org/abs/gr-qc/0205118} {arXiv:gr-qc/0205118} \BibitemShut {NoStop}%
\bibitem [{\citenamefont {Donnay}\ and\ \citenamefont {Marteau}(2019)}]{Donnay:2019jiz}%
  \BibitemOpen
  \bibfield  {author} {\bibinfo {author} {\bibfnamefont {L.}~\bibnamefont {Donnay}}\ and\ \bibinfo {author} {\bibfnamefont {C.}~\bibnamefont {Marteau}},\ }\href {https://doi.org/10.1088/1361-6382/ab2fd5} {\bibfield  {journal} {\bibinfo  {journal} {Class. Quant. Grav.}\ }\textbf {\bibinfo {volume} {36}},\ \bibinfo {pages} {165002} (\bibinfo {year} {2019})},\ \Eprint {https://arxiv.org/abs/1903.09654} {arXiv:1903.09654 [hep-th]} \BibitemShut {NoStop}%
\bibitem [{\citenamefont {Grumiller}\ and\ \citenamefont {Merbis}(2020)}]{Grumiller:2019tyl}%
  \BibitemOpen
  \bibfield  {author} {\bibinfo {author} {\bibfnamefont {D.}~\bibnamefont {Grumiller}}\ and\ \bibinfo {author} {\bibfnamefont {W.}~\bibnamefont {Merbis}},\ }\href {https://doi.org/10.21468/SciPostPhys.8.1.010} {\bibfield  {journal} {\bibinfo  {journal} {SciPost Phys.}\ }\textbf {\bibinfo {volume} {8}},\ \bibinfo {pages} {010} (\bibinfo {year} {2020})},\ \Eprint {https://arxiv.org/abs/1906.10694} {arXiv:1906.10694 [hep-th]} \BibitemShut {NoStop}%
\bibitem [{\citenamefont {Redondo-Yuste}\ and\ \citenamefont {Lehner}(2022)}]{Redondo-Yuste:2022czg}%
  \BibitemOpen
  \bibfield  {author} {\bibinfo {author} {\bibfnamefont {J.}~\bibnamefont {Redondo-Yuste}}\ and\ \bibinfo {author} {\bibfnamefont {L.}~\bibnamefont {Lehner}},\ }\href@noop {} {\bibinfo {title} {{Non-linear black hole dynamics and Carrollian fluids}}} (\bibinfo {year} {2022}),\ \Eprint {https://arxiv.org/abs/2212.06175} {arXiv:2212.06175 [gr-qc]} \BibitemShut {NoStop}%
\bibitem [{\citenamefont {Anabal\'on}\ \emph {et~al.}(2021)\citenamefont {Anabal\'on}, \citenamefont {Brenner}, \citenamefont {Giribet},\ and\ \citenamefont {Montecchio}}]{Anabalon:2021wjy}%
  \BibitemOpen
  \bibfield  {author} {\bibinfo {author} {\bibfnamefont {A.}~\bibnamefont {Anabal\'on}}, \bibinfo {author} {\bibfnamefont {S.}~\bibnamefont {Brenner}}, \bibinfo {author} {\bibfnamefont {G.}~\bibnamefont {Giribet}},\ and\ \bibinfo {author} {\bibfnamefont {L.}~\bibnamefont {Montecchio}},\ }\href {https://doi.org/10.1103/PhysRevD.104.024044} {\bibfield  {journal} {\bibinfo  {journal} {Phys. Rev. D}\ }\textbf {\bibinfo {volume} {104}},\ \bibinfo {pages} {024044} (\bibinfo {year} {2021})},\ \Eprint {https://arxiv.org/abs/2103.05782} {arXiv:2103.05782 [hep-th]} \BibitemShut {NoStop}%
\bibitem [{\citenamefont {Ecker}\ \emph {et~al.}(2023)\citenamefont {Ecker}, \citenamefont {Grumiller}, \citenamefont {Hartong}, \citenamefont {P\'erez}, \citenamefont {Prohazka},\ and\ \citenamefont {Troncoso}}]{Ecker:2023uwm}%
  \BibitemOpen
  \bibfield  {author} {\bibinfo {author} {\bibfnamefont {F.}~\bibnamefont {Ecker}}, \bibinfo {author} {\bibfnamefont {D.}~\bibnamefont {Grumiller}}, \bibinfo {author} {\bibfnamefont {J.}~\bibnamefont {Hartong}}, \bibinfo {author} {\bibfnamefont {A.}~\bibnamefont {P\'erez}}, \bibinfo {author} {\bibfnamefont {S.}~\bibnamefont {Prohazka}},\ and\ \bibinfo {author} {\bibfnamefont {R.}~\bibnamefont {Troncoso}},\ }\href@noop {} {\  (\bibinfo {year} {2023})},\ \Eprint {https://arxiv.org/abs/2308.10947} {arXiv:2308.10947 [hep-th]} \BibitemShut {NoStop}%
\bibitem [{\citenamefont {Ciambelli}\ and\ \citenamefont {Grumiller}(2023)}]{Ciambelli:2023tzb}%
  \BibitemOpen
  \bibfield  {author} {\bibinfo {author} {\bibfnamefont {L.}~\bibnamefont {Ciambelli}}\ and\ \bibinfo {author} {\bibfnamefont {D.}~\bibnamefont {Grumiller}},\ }\href@noop {} {\  (\bibinfo {year} {2023})},\ \Eprint {https://arxiv.org/abs/2311.04112} {arXiv:2311.04112 [hep-th]} \BibitemShut {NoStop}%
\bibitem [{\citenamefont {Gray}\ \emph {et~al.}(2023)\citenamefont {Gray}, \citenamefont {Kubiznak}, \citenamefont {Perche},\ and\ \citenamefont {Redondo-Yuste}}]{Gray:2022svz}%
  \BibitemOpen
  \bibfield  {author} {\bibinfo {author} {\bibfnamefont {F.}~\bibnamefont {Gray}}, \bibinfo {author} {\bibfnamefont {D.}~\bibnamefont {Kubiznak}}, \bibinfo {author} {\bibfnamefont {T.~R.}\ \bibnamefont {Perche}},\ and\ \bibinfo {author} {\bibfnamefont {J.}~\bibnamefont {Redondo-Yuste}},\ }\href {https://doi.org/10.1103/PhysRevD.107.064009} {\bibfield  {journal} {\bibinfo  {journal} {Phys. Rev. D}\ }\textbf {\bibinfo {volume} {107}},\ \bibinfo {pages} {064009} (\bibinfo {year} {2023})},\ \Eprint {https://arxiv.org/abs/2211.13695} {arXiv:2211.13695 [gr-qc]} \BibitemShut {NoStop}%
\bibitem [{\citenamefont {Bicak}\ \emph {et~al.}(2023)\citenamefont {Bicak}, \citenamefont {Kubiznak},\ and\ \citenamefont {Perche}}]{Bicak:2023vxs}%
  \BibitemOpen
  \bibfield  {author} {\bibinfo {author} {\bibfnamefont {J.}~\bibnamefont {Bicak}}, \bibinfo {author} {\bibfnamefont {D.}~\bibnamefont {Kubiznak}},\ and\ \bibinfo {author} {\bibfnamefont {T.~R.}\ \bibnamefont {Perche}},\ }\href@noop {} {\bibinfo {title} {{Monarch Migration of Carrollian Particles on the Black Hole Horizon}}} (\bibinfo {year} {2023}),\ \Eprint {https://arxiv.org/abs/2302.11639} {arXiv:2302.11639 [gr-qc]} \BibitemShut {NoStop}%
\bibitem [{\citenamefont {Marsot}\ \emph {et~al.}(2022{\natexlab{b}})\citenamefont {Marsot}, \citenamefont {Zhang},\ and\ \citenamefont {Horvathy}}]{Marsot:2022qkx}%
  \BibitemOpen
  \bibfield  {author} {\bibinfo {author} {\bibfnamefont {L.}~\bibnamefont {Marsot}}, \bibinfo {author} {\bibfnamefont {P.-M.}\ \bibnamefont {Zhang}},\ and\ \bibinfo {author} {\bibfnamefont {P.}~\bibnamefont {Horvathy}},\ }\href {https://doi.org/10.1103/PhysRevD.106.L121503} {\bibfield  {journal} {\bibinfo  {journal} {Phys. Rev. D}\ }\textbf {\bibinfo {volume} {106}},\ \bibinfo {pages} {L121503} (\bibinfo {year} {2022}{\natexlab{b}})},\ \Eprint {https://arxiv.org/abs/2207.06302} {arXiv:2207.06302 [gr-qc]} \BibitemShut {NoStop}%
\bibitem [{\citenamefont {Herfray}(2022)}]{Herfray:2021qmp}%
  \BibitemOpen
  \bibfield  {author} {\bibinfo {author} {\bibfnamefont {Y.}~\bibnamefont {Herfray}},\ }\href {https://doi.org/10.1088/1361-6382/ac635f} {\bibfield  {journal} {\bibinfo  {journal} {Class. Quant. Grav.}\ }\textbf {\bibinfo {volume} {39}},\ \bibinfo {pages} {215005} (\bibinfo {year} {2022})},\ \Eprint {https://arxiv.org/abs/2112.09048} {arXiv:2112.09048 [gr-qc]} \BibitemShut {NoStop}%
\bibitem [{\citenamefont {Chandrasekaran}\ \emph {et~al.}(2022)\citenamefont {Chandrasekaran}, \citenamefont {Flanagan}, \citenamefont {Shehzad},\ and\ \citenamefont {Speranza}}]{Chandrasekaran:2021hxc}%
  \BibitemOpen
  \bibfield  {author} {\bibinfo {author} {\bibfnamefont {V.}~\bibnamefont {Chandrasekaran}}, \bibinfo {author} {\bibfnamefont {E.~E.}\ \bibnamefont {Flanagan}}, \bibinfo {author} {\bibfnamefont {I.}~\bibnamefont {Shehzad}},\ and\ \bibinfo {author} {\bibfnamefont {A.~J.}\ \bibnamefont {Speranza}},\ }\href {https://doi.org/10.1007/JHEP01(2022)029} {\bibfield  {journal} {\bibinfo  {journal} {JHEP}\ }\textbf {\bibinfo {volume} {01}},\ \bibinfo {pages} {029}},\ \Eprint {https://arxiv.org/abs/2109.11567} {arXiv:2109.11567 [hep-th]} \BibitemShut {NoStop}%
\bibitem [{\citenamefont {Ciambelli}\ \emph {et~al.}(2019)\citenamefont {Ciambelli}, \citenamefont {Leigh}, \citenamefont {Marteau},\ and\ \citenamefont {Petropoulos}}]{Ciambelli:2019lap}%
  \BibitemOpen
  \bibfield  {author} {\bibinfo {author} {\bibfnamefont {L.}~\bibnamefont {Ciambelli}}, \bibinfo {author} {\bibfnamefont {R.~G.}\ \bibnamefont {Leigh}}, \bibinfo {author} {\bibfnamefont {C.}~\bibnamefont {Marteau}},\ and\ \bibinfo {author} {\bibfnamefont {P.~M.}\ \bibnamefont {Petropoulos}},\ }\href {https://doi.org/10.1103/PhysRevD.100.046010} {\bibfield  {journal} {\bibinfo  {journal} {Phys. Rev. D}\ }\textbf {\bibinfo {volume} {100}},\ \bibinfo {pages} {046010} (\bibinfo {year} {2019})},\ \Eprint {https://arxiv.org/abs/1905.02221} {arXiv:1905.02221 [hep-th]} \BibitemShut {NoStop}%
\bibitem [{\citenamefont {Bagchi}\ \emph {et~al.}(2016{\natexlab{b}})\citenamefont {Bagchi}, \citenamefont {Basu}, \citenamefont {Kakkar},\ and\ \citenamefont {Mehra}}]{Bagchi:2016bcd}%
  \BibitemOpen
  \bibfield  {author} {\bibinfo {author} {\bibfnamefont {A.}~\bibnamefont {Bagchi}}, \bibinfo {author} {\bibfnamefont {R.}~\bibnamefont {Basu}}, \bibinfo {author} {\bibfnamefont {A.}~\bibnamefont {Kakkar}},\ and\ \bibinfo {author} {\bibfnamefont {A.}~\bibnamefont {Mehra}},\ }\href {https://doi.org/10.1007/JHEP12(2016)147} {\bibfield  {journal} {\bibinfo  {journal} {JHEP}\ }\textbf {\bibinfo {volume} {12}},\ \bibinfo {pages} {147}},\ \Eprint {https://arxiv.org/abs/1609.06203} {arXiv:1609.06203 [hep-th]} \BibitemShut {NoStop}%
\bibitem [{\citenamefont {Donnay}\ \emph {et~al.}(2022)\citenamefont {Donnay}, \citenamefont {Fiorucci}, \citenamefont {Herfray},\ and\ \citenamefont {Ruzziconi}}]{Donnay:2022aba}%
  \BibitemOpen
  \bibfield  {author} {\bibinfo {author} {\bibfnamefont {L.}~\bibnamefont {Donnay}}, \bibinfo {author} {\bibfnamefont {A.}~\bibnamefont {Fiorucci}}, \bibinfo {author} {\bibfnamefont {Y.}~\bibnamefont {Herfray}},\ and\ \bibinfo {author} {\bibfnamefont {R.}~\bibnamefont {Ruzziconi}},\ }\href {https://doi.org/10.1103/PhysRevLett.129.071602} {\bibfield  {journal} {\bibinfo  {journal} {Phys. Rev. Lett.}\ }\textbf {\bibinfo {volume} {129}},\ \bibinfo {pages} {071602} (\bibinfo {year} {2022})},\ \Eprint {https://arxiv.org/abs/2202.04702} {arXiv:2202.04702 [hep-th]} \BibitemShut {NoStop}%
\bibitem [{\citenamefont {Campoleoni}\ \emph {et~al.}(2023)\citenamefont {Campoleoni}, \citenamefont {Delfante}, \citenamefont {Pekar}, \citenamefont {Petropoulos}, \citenamefont {Rivera-Betancour},\ and\ \citenamefont {Vilatte}}]{Campoleoni:2023fug}%
  \BibitemOpen
  \bibfield  {author} {\bibinfo {author} {\bibfnamefont {A.}~\bibnamefont {Campoleoni}}, \bibinfo {author} {\bibfnamefont {A.}~\bibnamefont {Delfante}}, \bibinfo {author} {\bibfnamefont {S.}~\bibnamefont {Pekar}}, \bibinfo {author} {\bibfnamefont {P.~M.}\ \bibnamefont {Petropoulos}}, \bibinfo {author} {\bibfnamefont {D.}~\bibnamefont {Rivera-Betancour}},\ and\ \bibinfo {author} {\bibfnamefont {M.}~\bibnamefont {Vilatte}},\ }\href {https://doi.org/10.1007/JHEP12(2023)078} {\bibfield  {journal} {\bibinfo  {journal} {JHEP}\ }\textbf {\bibinfo {volume} {12}},\ \bibinfo {pages} {078}},\ \Eprint {https://arxiv.org/abs/2309.15182} {arXiv:2309.15182 [hep-th]} \BibitemShut {NoStop}%
\bibitem [{\citenamefont {Salzer}(2023)}]{Salzer:2023jqv}%
  \BibitemOpen
  \bibfield  {author} {\bibinfo {author} {\bibfnamefont {J.}~\bibnamefont {Salzer}},\ }\href {https://doi.org/10.1007/JHEP10(2023)084} {\bibfield  {journal} {\bibinfo  {journal} {JHEP}\ }\textbf {\bibinfo {volume} {10}},\ \bibinfo {pages} {084}},\ \Eprint {https://arxiv.org/abs/2304.08292} {arXiv:2304.08292 [hep-th]} \BibitemShut {NoStop}%
\bibitem [{\citenamefont {Ciambelli}(2023)}]{Ciambelli:2023xqk}%
  \BibitemOpen
  \bibfield  {author} {\bibinfo {author} {\bibfnamefont {L.}~\bibnamefont {Ciambelli}},\ }\href@noop {} {\  (\bibinfo {year} {2023})},\ \Eprint {https://arxiv.org/abs/2311.04113} {arXiv:2311.04113 [hep-th]} \BibitemShut {NoStop}%
\bibitem [{\citenamefont {Chen}\ \emph {et~al.}(2013)\citenamefont {Chen}, \citenamefont {Fasiello}, \citenamefont {Lim},\ and\ \citenamefont {Tolley}}]{Chen:2012au}%
  \BibitemOpen
  \bibfield  {author} {\bibinfo {author} {\bibfnamefont {T.-j.}\ \bibnamefont {Chen}}, \bibinfo {author} {\bibfnamefont {M.}~\bibnamefont {Fasiello}}, \bibinfo {author} {\bibfnamefont {E.~A.}\ \bibnamefont {Lim}},\ and\ \bibinfo {author} {\bibfnamefont {A.~J.}\ \bibnamefont {Tolley}},\ }\href {https://doi.org/10.1088/1475-7516/2013/02/042} {\bibfield  {journal} {\bibinfo  {journal} {JCAP}\ }\textbf {\bibinfo {volume} {02}},\ \bibinfo {pages} {042}},\ \Eprint {https://arxiv.org/abs/1209.0583} {arXiv:1209.0583 [hep-th]} \BibitemShut {NoStop}%
\bibitem [{\citenamefont {Ostrogradsky}(1850)}]{Ostrogradsky:1850fid}%
  \BibitemOpen
  \bibfield  {author} {\bibinfo {author} {\bibfnamefont {M.}~\bibnamefont {Ostrogradsky}},\ }\href@noop {} {\bibfield  {journal} {\bibinfo  {journal} {Mem. Acad. St. Petersbourg}\ }\textbf {\bibinfo {volume} {6}},\ \bibinfo {pages} {385} (\bibinfo {year} {1850})}\BibitemShut {NoStop}%
\bibitem [{\citenamefont {Woodard}(2015)}]{Woodard:2015zca}%
  \BibitemOpen
  \bibfield  {author} {\bibinfo {author} {\bibfnamefont {R.~P.}\ \bibnamefont {Woodard}},\ }\href {https://doi.org/10.4249/scholarpedia.32243} {\bibfield  {journal} {\bibinfo  {journal} {Scholarpedia}\ }\textbf {\bibinfo {volume} {10}},\ \bibinfo {pages} {32243} (\bibinfo {year} {2015})},\ \Eprint {https://arxiv.org/abs/1506.02210} {arXiv:1506.02210 [hep-th]} \BibitemShut {NoStop}%
\bibitem [{\citenamefont {Motohashi}\ and\ \citenamefont {Suyama}(2020)}]{Motohashi:2020psc}%
  \BibitemOpen
  \bibfield  {author} {\bibinfo {author} {\bibfnamefont {H.}~\bibnamefont {Motohashi}}\ and\ \bibinfo {author} {\bibfnamefont {T.}~\bibnamefont {Suyama}},\ }\href {https://doi.org/10.1007/JHEP09(2020)032} {\bibfield  {journal} {\bibinfo  {journal} {JHEP}\ }\textbf {\bibinfo {volume} {09}},\ \bibinfo {pages} {032}},\ \Eprint {https://arxiv.org/abs/2001.02483} {arXiv:2001.02483 [hep-th]} \BibitemShut {NoStop}%
\bibitem [{\citenamefont {Koutrolikos}\ and\ \citenamefont {Najafizadeh}(2023)}]{Koutrolikos:2023evq}%
  \BibitemOpen
  \bibfield  {author} {\bibinfo {author} {\bibfnamefont {K.}~\bibnamefont {Koutrolikos}}\ and\ \bibinfo {author} {\bibfnamefont {M.}~\bibnamefont {Najafizadeh}},\ }\href {https://doi.org/10.1103/PhysRevD.108.125014} {\bibfield  {journal} {\bibinfo  {journal} {Phys. Rev. D}\ }\textbf {\bibinfo {volume} {108}},\ \bibinfo {pages} {125014} (\bibinfo {year} {2023})},\ \Eprint {https://arxiv.org/abs/2309.16786} {arXiv:2309.16786 [hep-th]} \BibitemShut {NoStop}%
\end{thebibliography}
\end{document}